\documentclass[a4paper,UKenglish,cleveref, autoref]{lipics-v2021}

\usepackage{amsmath}
\usepackage{amsfonts}
\usepackage{amssymb}
\usepackage{xspace}
\usepackage[table,xcdraw]{xcolor}
\usepackage{graphicx}
\usepackage{multicol}
\usepackage{stmaryrd} 

\bibliographystyle{plainurl}

\usepackage{tikz}
\usetikzlibrary{arrows,automata,decorations.pathmorphing}
\usetikzlibrary{positioning}
\usetikzlibrary{shapes,shapes.geometric,fit,calc,automata,}

\newcommand{\NotNeeded}[1]{}
\newcommand{\FullVersion}[1]{}
\newcommand{\NotNow}[1]{}

\newcommand{\Subject}[1]{\subparagraph{#1.}}
\newcommand{\St}{~|~}
\newcommand{\con}{\cdot}

\newcommand{\Nat}{\ensuremath{\mathbb{N}}\xspace}
\newcommand{\Rat}{\ensuremath{\mathbb{Q}}\xspace}
\newcommand{\Reals}{\ensuremath{\mathbb{R}}\xspace}

\newcommand{\out}[1]{\bar{#1}}

\newcommand{\best}[2]{\mathtt{Best}_{#1}(#2)}

\newcommand{\True}{\mathtt{true}}
\newcommand{\False}{\mathtt{false}}


\newcommand{\Func}[1]{{\mathsf{#1}}}
\newcommand{\BP}{\Func{B}^+}

\newcommand{\A}{{\cal A}}
\newcommand{\B}{{\cal B}}

\newcommand{\T}{{\cal T}}

\newcommand{\trans}[3]{#1\xrightarrow[]{#2}#3}
\newcommand{\letter}{a}
\newcommand{\weight}{x}

\newcommand{\Val}{\Func{Val}}
\newcommand{\Inf}{\Func{Inf}}
\newcommand{\Sup}{\Func{Sup}}
\newcommand{\Sum}{\Func{Sum}}
\newcommand{\DSum}{\Func{DSum}}
\newcommand{\Avg}{\Func{Avg}}
\newcommand{\LimInf}{\Func{LimInf}}
\newcommand{\LimSup}{\Func{LimSup}}
\newcommand{\LimInfAvg}{\Func{LimInfAvg}}
\newcommand{\LimSupAvg}{\Func{LimSupAvg}}
\newcommand{\MPL}{\underline{\Func{MeanPayoff}}}
\newcommand{\MPH}{\overline{\Func{MeanPayoff}}}

\definecolor{darkgreen}{rgb}{0.13, 0.55, 0.13}

\newcommand{\CheckRelevance}[1]{}

\newcommand{\LW}[2]{#1\!:\!#2}
\newcommand{\strat}{s}

\renewcommand{\phi}{\varphi}
\renewcommand{\epsilon}{\varepsilon}


\title{History Determinism vs. Good for Gameness\newline in Quantitative Automata} 

\titlerunning{History Determinism vs. Good for Gameness}

\author{Udi Boker}{Reichman University, Herzliya, Israel}{}{}{}

\author{Karoliina Lehtinen}{
	CNRS, Marseille-Aix Universit\'e, Universit\'e de Toulon, LIS, Marseille, France}{}{}{}

\authorrunning{U. Boker and K. Lehtinen}

	\Copyright{Udi Boker and Karoliina Lehtinen}

\ccsdesc[300]{Theory of computation~Logic and verification}

\keywords{Good for games, history determinism, alternation, quantitative automata}

\category{}

\supplement{}


\acknowledgements{We thank Jan Otop for discussing Borel definability in quantitative automata.}

\nolinenumbers 


\EventEditors{Miko{\l}aj Boja\'{n}czyk and Chandra Chekuri}
\EventNoEds{2}
\EventLongTitle{41st IARCS Annual Conference on Foundations of Software Technology and Theoretical Computer Science (FSTTCS 2021)}
\EventShortTitle{FSTTCS 2021}
\EventAcronym{FSTTCS}
\EventYear{2021}
\EventDate{December 15--17, 2021}
\EventLocation{Virtual Conference}
\EventLogo{}
\SeriesVolume{213}
\ArticleNo{35}

\begin{document}

\maketitle

\begin{abstract}
Automata models between determinism and nondeterminism/alternations can retain some of the algorithmic properties of deterministic automata while enjoying some of the expressiveness and succinctness of nondeterminism.
We study three closely related such models -- history determinism, good for gameness and determinisability by pruning -- on quantitative automata.

While in the Boolean setting, history determinism and good for gameness coincide, we show that this is no longer the case in the quantitative setting: good for gameness is broader than history determinism, and coincides with a relaxed version of it, defined with respect to thresholds. We further identify criteria in which history determinism, which is generally broader than determinisability by pruning, coincides with it, which we then apply to typical quantitative automata types.

As a key application of good for games and history deterministic automata is synthesis, we clarify the relationship between the two notions and various quantitative synthesis problems. We show that good-for-games automata are central for ``global'' (classical) synthesis, while ``local'' (good-enough) synthesis reduces to deciding whether a nondeterministic automaton is history deterministic.
\end{abstract}

\section{Introduction}\label{sec:intro}
Boolean automata recognise languages of finite or infinite words, often used in verification to describe system behaviours. In contrast, quantitative automata define functions from words to values, and can describe system properties such as energy usage, battery-life or costs. Like Boolean automata, quantitative automata can have nondeterministic choices (disjunctions) and universal choices (conjunctions), which make them more powerful than deterministic models. Alternating automata combine both nondeterministic and universal choices.

However, not all nondeterminism is born equal. Generally, nondeterminism increases the expressiveness and succinctness of an automata model, but at the cost of also increasing the complexity of algorithmic problems on it, sometimes even rendering them undecidable. However, restricted forms of nondeterministic and even alternating automata can enjoy some of the good algorithmic properties of deterministic automata while also gaining in expressiveness and succinctness. 

We focus on three closely related restrictions on nondeterminism and alternations, relevant to the synthesis problem. \textit{History determinism}~\cite{Col09} postulates that the choices in the automaton -- whether they be nondeterministic or universal -- should not depend on the future of the input word. That is, one should be able to construct runs letter by letter while reading the input word, so that the resulting run is as good as one constructed with the knowledge of the full word. 
The notion of \textit{good for games} automata comes from solving two-player games without determinisation~\cite{HP06}. It postulates that the composition of such an automaton $\A$ with games whose payoff function is described by $\A$ should be an equivalent game -- that is, one with the same winner in the Boolean setting, or the same value in the quantitative setting. Finally, an automaton is \textit{determinisable by pruning} if it embeds an equivalent deterministic automaton and, at least in the nondeterministic case, this notion can be seen as a (stronger) ``semi-syntactic'' version of history determinism.

The three notions are well studied in the Boolean setting. There, history determinism and good for gameness coincide, and are broader than determinisability by pruning in general, but coincide with it for some automata types \cite{BL19}.

We generalize these notions to the quantitative setting and study the relations between them. Some versions of these notions already appear in the literature with respect to quantitative automata, as we elaborate on in the related-work paragraph, however not in a systematic and consistent way, and without analysis of the relations between them.

We start with general results concerning arbitrary quantitative automata and then provide a more specific analysis of the following most common types of quantitative automata: $\Sum, \Avg, \Inf, \Sup$, discounted sum ($\DSum)$, $\LimInf, \LimSup, \LimInfAvg$ and $\LimSupAvg$.

Surprisingly, it turns out that good for gameness and history determinism no longer coincide in the quantitative setting.
The surprise comes from the fact that the two names are used interchangeably in the Boolean setting and are already starting to mix in the quantitative setting. 
(In the Boolean setting, even the seminal paper of Henzinger and Piterman \cite{HP06}, which named the ``good for games'' notion, defined history deterministic automata and showed that they are indeed good for games, while the other direction was only shown later~\cite{BL19}. In the quantitative setting, \cite{HPR15,HPR16,HPR17} speak of good for games quantitative automata, although their definition is closer to history determinism.)

We first observe that in the quantitative setting, the three notions need sub-notions, relating to whether one considers automata/games equivalence with respect to values or thresholds. (See \cref{sec:QuantitativeGfgDefinitions} for the exact definitions.)

We then show that while good for gameness coincides with threshold good for gameness, history determinism is stricter than threshold history determinism, and only the latter, under some assumptions, is equivalent to good for gameness. (See \cref{fig:Comaprison} for a detailed scheme of the relations.) The assumption for the equivalence of threshold history determinism and good for gameness is that the ``letter game'' played on the quantitative automaton (which defines whether or not it is history deterministic) is determined. We show that this is guaranteed for quantitative automata whose threshold versions define Borel sets.

Determinizability by pruning, which has an appealing structural definition, is generally stricter than history determinism for nondeterministic automata, already in the Boolean setting, while equivalent to it for some automata types.
We observe that the two notions are incomparable for alternating automata, already in the Boolean setting (see \cref{fig:DbpInCmpHd}).
We then analyse general properties of value functions that guarantee the equivalence of determinizability by pruning and history determinism for all nondeterministic quantitative automata whose value function has these properties.
We  apply these results to specific automata types. Specifically, we show the equivalence for 
$\Sum, \Avg, \Inf$ and $\Sup$ automata on finite words and $\DSum$ automata on finite and infinite words.

Finally, we discuss how the different notions are relevant for different quantitative synthesis problems.
In quantitative synthesis~\cite{brenguier2016non,SKRV17}, the specification is a function $f$ that maps sequences of input-output pairs onto values. The goal of the system is to respond to input letters by producing output letters while maximising the value of the resulting input-output sequence.
Given a function $f$, one can ask several questions: (i) what is the best value a system can guarantee over all inputs~\cite{RCHJ09}? (ii) can it guarantee at least a threshold value? (iii) can it guarantee for each input sequence $I$ the best value that an input-output sequence including $I$ has~\cite{FLW20}? (iv) can it achieve a threshold value $t$ for all inputs that appear in an input-output sequence with value at least $t$?
In a nutshell, we show that on one hand, (threshold) good for games alternating quantitative automata can be used to solve (i) and (ii) via a product construction similar to the one used for deterministic automata~\cite{RCHJ09}; and on the other hand, (iii) and (iv) for (threshold) history deterministic nondeterministic automata are linearly inter-reducible with deciding the (threshold) history-determinism of an automaton.

\Subject{Related work}
Thomas Colcombet’s original definition of history determinism \cite{Col09} also considered non-Boolean automata, namely cost automata. While the restriction of his definition to $\omega$-regular automata coincides with the original definition in \cite{HP06} of good for games automata \cite{BL19}, in the quantitative setting his definition is different from what we provide here. His notion can be viewed as `approximated history-deterministic with respect to a threshold' as it asks for an approximation ratio that describes the difference between the value achieved by a strategy without the knowledge of the full input word and the actual value of the word. 
Another notion of approximative history determinism appears in~\cite{HPR15,HPR16,HPR17} under the name of $r$-GFGness, where $r$ is a bound on the difference of the two values.
Zero-regret determinizability~\cite{AKL10,HPR16} on the other hand lies somewhere between approximative determinizability by pruning and approximative history determinism. It requires an automaton to be approximatively equivalent to a deterministic automaton obtained by taking the product of the input automaton with a finite memory, with both the size of the memory and permitted regret as parameters. When both are set to zero, we have determinizability by pruning.

Observe that we use the term ``quantitative automata'' rather than ``weighted automata''.  
The latter usually relates to the algebraic definition, whereby the value of a nondeterministic automaton on a word is the semiring sum (or valuation-monoid sum) of its accepting runs' values. It is generally not defined for alternating automata.
The former defines the value of a nondeterministic or alternating automaton on a word to be the supremum/infimum of its runs' values, having the `choice' and `obligation' interpretation of nondeterminism and universality, respectively. (See \cite{Bok21} for a discussion on the differences between the two.)
Since history determinism naturally relates to `choice' and `obligation' in nondeterministic and alternating automata, quantitative automata better fit the present work.

Due to space constraints, some of the proofs appear in the appendix.

\section{Preliminaries}\label{sec:Preliminaries}
\Subject{Words}
An \emph{alphabet} $\Sigma$ is a finite nonempty set of letters. A finite (resp.\ infinite) \emph{word} $u=u_0 \ldots u_k\in \Sigma^{*}$ (resp.\ $w=w_0 w_1\ldots\in \Sigma^{\omega}$) is a finite (resp.\ infinite) sequence of letters from $\Sigma$.  
We write $\Sigma^\infty$ for $\Sigma^* \cup \Sigma^\omega$.
We use $[i..j]$ to denote a set $\{i,\ldots,j\}$ of integers, $[i]$ for $[i..i]$, $[..j]$ for $[0..j]$, and $[i..]$ for integers equal to or larger than $i$. We write $w[i..j], w[..j]$, and $w[i..]$ for the infix $w_i \ldots w_j$, prefix $w_0 \ldots w_j$, and suffix $w_i \ldots$ of $w$.
A \emph{language} is a set of words, and the empty word is written $\epsilon$.

\Subject{Games}

We consider turn-based zero-sum games between Adam and Eve, with $\Sigma$-labelled transitions. A play generates a word, and each word has a value, given by the game's payoff function. 
Eve tries to maximise the value of the play, while Adam tries to minimise it.
Formally, for a payoff function $f$, an \emph{$f$ game} is defined on an \emph{arena} $(V, E, V_E,V_A, L: E\rightarrow \Sigma\cup \{\varepsilon\})$, which consists of a (potentially infinite) set of positions $V$, partitioned into Eve's positions $V_E$ and Adam's positions $V_A$, and a set of edges $E\subseteq V\times V$, labelled by $L$ with letters from $\Sigma\cup \{\varepsilon\}$. In infinite-duration games every position has at least one outgoing edge. A play is a maximal path over $V$; its non-$\varepsilon$ labels induce a word $w\in \Sigma^{*}$ or $\Sigma^\omega$. The payoff of a play is the value of this word, given by the payoff function $f$. 

 Strategies for Adam and Eve map partial plays ending in a position $v$ in $V_A$ and $V_E$ respectively to outgoing edges from $v$. A play or partial play $\pi$ agrees with a strategy $\strat_P$, written $\pi\in \strat_P$, for a player $P\in \{A,E\}$, if whenever its prefix $p$ ends in a position in $V_P$, the next edge is $\strat_P(p)$. The value $f(\strat_E)$ of a strategy $\strat_E$ for Eve is $\inf_{\pi\in \strat_E} f(\pi)$ and the value $f(\strat_A)$ of a strategy $\strat_A$ for Adam is $\sup_{\pi\in\strat_A} f(\pi)$. 
 Let $S_E$ and $S_A$ be the sets of strategies for Eve and Adam respectively. If $\sup_{\strat \in S_E} f(\strat)$ (the best Eve can do) coincides with $\inf_{\strat \in S_A} f(\strat)$ (the best Adam can do), we say that $G$ is determined and $\sup_{\strat \in S_E} f(\strat)=\inf_{\strat \in S_A} f(\strat)$ is called the value of $G$.
Eve wins the $t$-threshold game on $G$, for some $t\in\Reals$, if the value of $G$ is at least $t$; else Adam wins. Eve wins the strict $t$-threshold game on $G$ if the value of $G$ is greater than $t$.
Two games are equivalent in this context if they have the same value.
We restrict the scope of this article to determined games.

\Subject{Quantitative Automata}

An \emph{alternating quantitative automaton} on words is a tuple $\A=(\Sigma,Q,\iota,\delta)$, where: $\Sigma$ is an alphabet; $Q$ is a finite nonempty set of states; $\iota\in Q$ is an initial state; and $\delta\colon Q\times \Sigma \to \BP(\Rat \times Q)$ is a transition function, where $\BP(\Rat\times Q)$ is the set of positive Boolean formulas (\emph{transition conditions}) over weight-state pairs. 

A \emph{transition} is a tuple $(q,\letter,\weight,q')\in Q{\times}\Sigma{\times} \Rat\times Q$, sometimes also written $\trans{q}{\letter:\weight }{q'}$. (Note that there might be several transitions with different weights over the same letter between the same pair of states\footnote{This extra flexibility of allowing for ``parallel'' transitions with different weights is often omitted (e.g., in \cite{CDH09}) since it is redundant for some value functions while important for others.}.) We write $\gamma(t)=\weight$ for the weight of a transition $t=(q,\letter,\weight,q')$.

An automaton $\A$ is nondeterministic (resp.\ universal) if all its transition conditions are disjunctions (resp.\ conjunctions), and it is deterministic if all its transition conditions are just weight-state pairs. We represent the transition function of nondeterministic and universal automata as $\delta\colon Q\times \Sigma\to 2^{(\Rat\times Q)}$, and of a deterministic automaton as $\delta\colon Q\times \Sigma\to \Rat\times Q$. 

We require that the automaton $\A$ is $\emph{total}$, namely that for every state $q\in Q$ and letter $\letter\in\Sigma$, there is at least one state $q'$ and a transition $\trans{q}{\letter:\weight}{q'}$.
For a state $q\in Q$, we denote by $\A^q$ the automaton that is derived from $\A$ by setting its initial state $\iota$ to $q$. 

A run of the automaton on a word $w$ is intuitively a play between Adam and Eve. It starts in the initial state $\iota$, and in each round, when the automaton is in state $q$ and the next letter of $w$ is $\letter$, Eve resolves the nondeterminism (disjunctions) of the transition condition $\delta(q,\letter)$ and Adam resolves its universality (conjunctions), yielding a transition $\trans{q}{\letter:\weight}{q'}$. The output of a play is thus a sequence $\pi = t_0 t_1 t_2 \ldots$ of transitions. As each transition $t_i$ carries a weight $\gamma(t_i)\in\Rat$, the sequence $\pi$ provides a weight sequence $\gamma(\pi) = \gamma(t_0) \gamma(t_1) \gamma(t_2) \ldots$. 
More formally, given the automaton $\A=(\Sigma, Q,\iota,\delta)$ and a word $w\in \Sigma^*$ (resp.\ $w\in \Sigma^\omega$), we define the arena $G(\A,w)$ 
with positions $Q\times\Sigma^* \times \BP(\Rat\times Q)$ (resp.\ $Q\times\Sigma^\omega \times \BP(\Rat\times Q)$), 
the initial position $(\iota, w,\delta(\iota, w[0]))$,
$\varepsilon$-labelled edges from $(q,u,b)$ to $(q,u,b')$ when $b'$ is an immediate subformula of $b$, and $x$-labelled edges from $(q,u,(x,q'))$ to $(q',u[1..],\delta(q',u[1]))$. Conjunctive positions belong to Adam while disjunctive ones belong to Eve.

A $\Val$ automaton (for example a $\Sum$ automaton) is one equipped with a \emph{value function} $\Val:\Rat^* \to \Reals$ or $\Val:\Rat^\omega \to \Reals$. The corresponding game is the $\Val$ game on the arena $G(\A,w)$: each run $\pi$ (play in $G(\A,w)$) has a real value $\Val(\gamma(\pi))$, which we abbreviate by $\Val(\pi)$. 
When this game is determined, we say that the value of $\A(w)$ is the value of $G(\A,w)$, and if $G(\A,w)$ is determined for all $w\in \Sigma^{\omega}$, we say that $\A$ realizes a function from words to real numbers.
We restrict the scope of this article to automata realizing functions.
 
Two automata $\A$ and $\A'$ are \emph{equivalent}, denoted by $\A\equiv\A'$, if they realize the same function.
For a threshold $t\in\Reals$ and a $\Val$ automaton $\A$, we also speak of a corresponding Boolean $t$-threshold $\Val$ automaton $\A'$ that accepts the words $w$ such that $\A(w)\geq t$.

Observe that when $\A$ is nondeterministic, a run of $\A$ on a word $w$ is a sequence $\pi$ of transitions, and the value of $\A$ on $w$ is the supremum of $\Val(\pi)$ over all these runs $\pi$.

\Subject{Value functions}
We list here the most common value functions for quantitative automata on finite/infinite words, defined over sequences of rational weights\footnote{There are also value functions that are more naturally defined over sequences of tuples of rational numbers, for example discounted-summation with multiple discount factors \cite{BH21}.}:
\begin{itemize} 
\item For finite sequences $v=v_0 v_1 \ldots v_{n-1}$:
\vspace{-.3cm}
\begin{multicols}{2}
	\begin{itemize}
	\item $\displaystyle \Sum(v) = \sum_{i=0}^{n-1} v_i$
	\item $\displaystyle \Avg(v) = \frac{1}{n} \sum_{i=0}^{n-1} v_i$
\end{itemize}
\end{multicols} \vspace{-.2cm}
\item For finite and infinite sequences $v=v_0 v_1 \ldots$:
\vspace{-.2cm}
\begin{multicols}{2}
	\begin{itemize}
	\item $\displaystyle \Inf(v) = \inf\{v_n \St n \geq 0\}$
	\item $\displaystyle \Sup(v) = \sup\{v_n \St n \geq 0\}$
\end{itemize}
\end{multicols}
\vspace{-.5cm}
	\begin{itemize}
	\item For a discount factor $\lambda\in\Rat\cap(0,1)$, $\displaystyle \DSum(v) = \sum_{i\geq 0} \lambda^i  v_i$
\end{itemize}
\vspace{-.5cm}
\item For infinite sequences $v=v_0 v_1 \ldots$:
\vspace{-.2cm}
\begin{multicols}{2}
	\begin{itemize}
	\item $\displaystyle \LimInf(v) = \lim_{n\to\infty}\limits\inf\{v_i \St i \geq n\}$
	\item $\displaystyle \LimSup(v) = \lim_{n\to\infty}\limits\sup\{v_i \St i \geq n\}$
\end{itemize}
\end{multicols}
\vspace{-.7cm}
\begin{itemize}
	\item $\displaystyle \LimInfAvg(v) ~=\, \LimInf (\Avg(v_0), \Avg(v_0,v_1), \Avg(v_0,v_1,v_2),\ldots)$\\[-0.3cm]
	\item $\displaystyle \LimSupAvg(v) = \LimSup (\Avg(v_0), \Avg(v_0,v_1), \Avg(v_0,v_1,v_2),\ldots)$
\end{itemize}
\end{itemize}
\vspace{-.2cm}
($\LimInfAvg$ and $\LimSupAvg$ are also called $\MPL$ and $\MPH$.)

\Subject{Products}
The synchronized product of a $\Sigma$-labelled game $G$ and an automaton $\A$ over alphabet $\Sigma$ is (like in the Boolean setting, see e.g., \cite[Definition 1]{BL19}) a game $G\times \A$ obtained by taking the product of the positions of $G$ and the states and transition conditions of $\A$, and their corresponding transitions. Positions with nondeterminism are of Eve and positions with universality are of Adam.  Transitions carry their weight from the corresponding transition in $\A$. The payoff function of the game is the value function of $\A$. 


\section{Good For Gameness, History Determinism, and Determinizability By Pruning}
\label{sec:QuantitativeGfgDefinitions}

In the Boolean setting, ``good for gameness'' and ``history determinism'', stemming from different concepts, coincide both for nondeterministic and alternating automata \cite{BL19}. 

We generalize these definitions to quantitative automata, observing that under this setting they need some sub-variants, relating to whether one considers automata/games equivalence with respect to all values or some threshold\footnote{For a threshold $t\in\Reals$, we provide the definitions with respect to a non-strict inequality $\geq t$ . Using strict inequality $>t$ instead, yields the same relations between the notions, as stated in \cref{cl:NotionRelations}.}. As shown in \cref{sec:comparisons}, the two main notions, as well as some of their variants, are generally not equivalent in the quantitative setting.

The notion of determinizability by pruning, which has an appealing structural definition, is generally stricter than good for gameness and history determinism in the setting of nondeterministic automata, already in the Boolean setting, yet we show that for some value functions it is equivalent to history determinism. For alternating automata, we show that it is incomparable with history determinism and good for gameness.

\begin{definition}[Good for gameness]\label{def:GFGameComposition}
An automaton $\A$ realizing a function $f:\Sigma^* \to \Reals$ or $f:\Sigma^\omega \to \Reals$ is 
\begin{itemize}
\item \emph{good for games} if for every determined\footnote{We discuss in the conclusion  questions that arise if this restriction is lifted} game $G$ with a $\Sigma$-labelled arena and payoff function $f$, we have that  $G$ and $G\times\A$ have the same value; 
\item \emph{good for $t$-threshold games}, for some $t\in\Reals$, if for every determined game $G$ with a $\Sigma$-labelled arena and payoff function $f$, Eve wins the $t$-threshold game on $G$ if and only if she wins the $t$-threshold game on $G\times \A$;
\item \emph{good for threshold games} if it is good for $t$-threshold games for all $t\in\Reals$.
\end{itemize}
\end{definition}

An automaton is history deterministic if there are strategies to resolve its nondeterminism and universality, such that for every word, the (threshold) value remains the same. 
\begin{definition}[History-determinism] \label{def:HistoryDet}
Consider an alternating $\Val$ automaton $\A=(\Sigma,Q,\iota,\delta)$ realizing a function $f:\Sigma^* \to \Reals$ or $f:\Sigma^\omega \to \Reals$. Formally, \emph{history determinism} is defined via \emph{letter games}, detailed below.
\begin{itemize}
	\item $\A$ is \emph{history deterministic} if Eve and Adam win their letter games.
	\item $\A$ is \emph{$t$-threshold history deterministic}, for some $t\in\Reals$, if Eve and Adam win their $t$-threshold letter games.
	\item $\A$ is \emph{threshold history deterministic} if it is $t$-threshold history deterministic for all $t\in\Reals$.
\end{itemize}

Eve's (Adam's) letter games are the following win-lose games, in which Adam (Eve) chooses the next letter and Eve and Adam resolve the nondeterminism and universality, aiming to construct a run whose value is (threshold) equivalent to the generated word's value.
\begin{description}
\item[Eve's letter game:] 
A configuration is a pair $(\sigma, b)$ where $b\in \BP(Q)$ is a transition condition and $\sigma\in\Sigma\cup\{\epsilon\}$ is a letter. (We abuse $\epsilon$ to also be an empty letter.)
A play begins in $(\sigma_0,b_0)=(\epsilon,\iota)$ and consists of an infinite sequence of configurations $(\sigma_0,b_0)(\sigma_1,b_1)\ldots$. 
In a configuration $(\sigma_i,b_i)$, the play proceeds to the next configuration $(\sigma_{i+1},b_{i+1})$ as follows.  
\begin{itemize}
\item If $b_i$ is a state of $Q$, Adam picks a letter $a$ from $\Sigma$, and $(\sigma_{i+1},b_{i+1})=(a,\delta(b_i,a))$.
\item If $b_i$ is a conjunction $b_i=b' \land b''$, Adam chooses between $(\epsilon,b')$ and $(\epsilon,b'')$.
\item If $b_i$ is a disjunction $b_i=b' \lor b''$, Eve chooses between $(\epsilon,b')$ and $(\epsilon,b'')$.
\end{itemize}
In the limit, a play consists of an infinite word $w$ that is derived from the concatenation of $\sigma_0,\sigma_1,\ldots$, as well as an infinite sequence $b_0,b_1,\ldots$ of transition conditions, which yields an infinite sequence $\pi = t_0,t_1,\ldots$ of transitions.

If $\A$ is over infinite words, Eve wins a play in the letter-game  if $\Val(\pi) \geq \A(w)$. In the $t$-threshold letter game, Eve wins if  $\A(w)\geq t \implies \Val(\pi) \geq t$. For $\A$ over finite words, Eve wins if  $\Val(\pi[0..i]) \geq \A(w[0..i])$  or $\A(w[0..i])\geq t \implies \Val(\pi[0..i]) \geq t$ for all $i$.

\item[Adam's letter game] is similar to Eve's game, except that Eve chooses the letters instead of Adam, and Adam wins a play in his letter game if $\Val(\pi) \leq \A(w)$ and in his $t$-threshold letter game if $\A(w)<t\implies \Val(\pi)<\A(w)$. (The asymmetry of $<$ and $\leq$ is intended). 
\end{description}

\end{definition}

Intuitively, an automaton is determinizable by pruning if it can be determinized to an equivalent (w.r.t.\ a threshold) deterministic automaton by removing some of its states and transitions. (In an alternating automaton, ``removing transitions'' means removing some disjunctive and conjunctive choices.)

\begin{definition}[Determinizability by Pruning] \label{def:DBP}
	A $\Val$ automaton $\A$ is
	\begin{itemize}
		\item \emph{determinizable by pruning} if there exists a deterministic $\Val$ automaton $\A'$ that is derived from $\A$ by pruning, such that $\A'\equiv \A$;
		\item \emph{$t$-threshold determinizable by pruning} if there is a deterministic $\Val$ automaton $\A'$ that is derived from $\A$ by pruning, such that for every word $w$, we have $\A'(w) \geq t$ iff $\A(w) \geq t$;
		\item \emph{threshold determinizable by pruning} if it is $t$-threshold determinizable by pruning $\forall t\in\Reals$.
	\end{itemize}
\end{definition}

Observe that a $\Val$-automaton can be good for games, history deterministic, or determinizable by pruning when interpreted on infinite words, but not when interpreted on finite words, as demonstrated in \cref{fig:HdInfiniteNotFinite} .

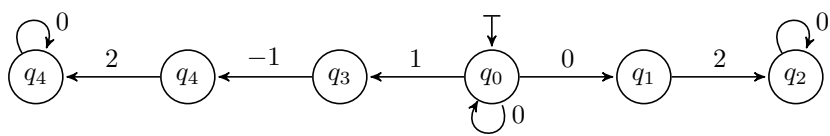
\begin{figure}[h]
	\centering
	\begin{tikzpicture}[->,>=stealth',shorten >=1pt,auto,node distance=2cm, semithick, initial text=, every initial by arrow/.style={|->},state/.style={circle, draw, minimum size=0.5cm}]

		\node[initial above, state] (q0) {$q_0$};
		\node[state] (q1) [ right of=q0,yshift=0.0cm] {$q_1$};
		\node[state] (q2) [right of=q1,] {$q_2$};
		\node[state] (q3) [ left of=q0,yshift=-0.0cm] {$q_3$};
		\node[state] (q4) [left of=q3,xshift=0cm] {$q_4$};
		\node[state] (q5) [left of=q4,xshift=0cm] {$q_4$};
		
		\path 
		(q0) edge	[<-,loop , out=-120, in=-70,looseness=5] node [xshift=0.2cm] {$0$} (q0)
		 		 edge  node {$0$} (q1)		
		 		 edge  node[above]{$1$} (q3)		
		 		 
		(q1) edge  node {$2$} (q2)
		(q2) edge	[loop above, out=120, in=70,looseness=5] node [right,xshift=.2cm]{$0$} (q2)
		
		(q3) edge  node[above]{$-1$} (q4)
		(q4) edge  node[above]{$2$} (q5)
		(q5) edge	[loop above, out=120, in=70,looseness=5] node [right,xshift=.2cm]{$0$} (q5)
		
		;
	\end{tikzpicture}
	\caption{A nondeterministic $\DSum$-automaton with discount factor $\frac{1}{2}$ over a unary alphabet that is determinizable by pruning, good for games, and history deterministic with respect to infinite words, but none of them with respect to finite words: For the single infinite word, the initial choice of going from $q_0$ to $q_1$ provides the optimal value of $1$, making it all of the above. On finite words, on the other hand, it is not even threshold history deterministic (and by \cref{cl:NotionRelations} neither of the rest), since in order to guarantee a value of at least $1$, the first transition should be different  for the word of length $1$ and the word of length $2$, going to $q_3$ for the former and to $q_1$ for the latter.}\label{fig:HdInfiniteNotFinite}
\end{figure}

\section{The Relations Between Notions}\label{sec:comparisons}

\newcommand{\OImplication}{$~\Longleftarrow~$}
\newcommand{\Implication}{$~\Longrightarrow~$}
\newcommand{\NoImplication}{${\hspace{0.15cm}/\hspace{-0.45cm}\Longrightarrow}~$}
\newcommand{\NoImplicationR}{${\hspace{0.15cm}/\hspace{-0.45cm}\Longleftarrow}~$}
\newcommand{\DualImplication}{$~\Longleftrightarrow~$}
\newcommand{\NonDualImplication}{$~^{\Longrightarrow}_{\hspace{0.15cm}/\hspace{-0.30cm}\Longleftarrow}~$}

Having defined these notions, we now establish which inclusions hold in general, and which are conditional on characteristics of the value function, as summarised in~\cref{fig:Comaprison}.

\begin{theorem}\label{cl:NotionRelations}
	(Threshold) good for gameness, (threshold) history determinism, and (threshold) determinizability by pruning of quantitative automata are related as described in \cref{fig:Comaprison}.
\end{theorem}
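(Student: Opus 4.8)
The plan is to decompose \cref{fig:Comaprison} into its constituent implications, non-implications, and conditional equivalences, and to treat them in order of increasing difficulty, grouping by the three families (good for gameness, history determinism, determinizability by pruning) and, within each, by the value-versus-threshold axis. First I would dispatch the unconditional monotonicity implications. For each family, the value property implies the threshold property: a winning Eve (Adam) strategy in the value letter game satisfies $\Val(\pi) \geq \A(w)$ (resp.\ $\leq$), which immediately yields the threshold win condition $\A(w)\geq t \implies \Val(\pi)\geq t$ for every $t$; and an equivalent pruned deterministic automaton $\A'\equiv\A$ trivially satisfies $\A'(w)\geq t \iff \A(w)\geq t$ for all $t$. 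The cross-family implication $\text{DBP}\Rightarrow\text{HD}$ is almost as direct: the pruned deterministic automaton $\A'$ furnishes an Eve strategy in her letter game that reconstructs the run of $\A'$ on the prefix read so far, achieving value $\A'(w)=\A(w)$, while for nondeterministic $\A$ Adam's letter game is won automatically since every run has value at most the supremum $\A(w)$.

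Next I would establish the equivalence of good for gameness and threshold good for gameness. One direction is immediate from the definition, since equal values of $G$ and $G\times\A$ force agreement of every threshold game. For the converse I would use that a real value is determined by the set of thresholds it exceeds: if Eve wins the $t$-threshold game on $G$ exactly when she wins it on $G\times\A$, for all $t$, then $\Val(G)=\sup\{t : \Val(G)\geq t\}$ equals $\Val(G\times\A)$, recovering good for gameness.

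The core of the proof is the conditional equivalence between threshold history determinism and good for gameness. For threshold $\text{HD}\Rightarrow\text{GFG}$, fixing a game $G$ and a threshold $t$, I would compose Eve's $t$-threshold letter-game strategy with an optimal Eve strategy in the $t$-threshold game on $G$, obtaining a play of $G\times\A$ witnessing $\Val(G\times\A)\geq t$ whenever $\Val(G)\geq t$; Adam's letter-game strategy yields the matching converse, so the two threshold games agree and, by the previous paragraph, $\A$ is good for games. For $\text{GFG}\Rightarrow$ threshold $\text{HD}$, the key observation is that Eve's (Adam's) $t$-threshold letter game is, up to the $\varepsilon$-bookkeeping of subformulas, the product $G^{\mathrm{free}}\times\A$ of $\A$ with the game $G^{\mathrm{free}}$ in which the letter-picking player freely dictates the whole word; good for gameness then equates the two threshold games, and \emph{determinacy} of the letter game converts ``Eve does not lose'' into ``Eve wins''. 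This is precisely where the Borel hypothesis enters, guaranteeing the letter game is determined. I expect this determinacy bridge, together with keeping the $\varepsilon$-transition bookkeeping of the product arena honest, to be the main obstacle; note that $\text{HD}\Rightarrow\text{GFG}$ then follows as a corollary, since value history determinism implies threshold history determinism.

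Finally I would supply the separating examples for the strict and incomparable edges. For $\text{HD}\subsetneq$ threshold $\text{HD}$---the genuinely new quantitative phenomenon---I would exhibit an automaton for which each $t$-threshold letter game admits a \emph{$t$-dependent} winning strategy, yet no single strategy attains the exact value on all words simultaneously, exploiting exactly the gap between matching every threshold (with possibly different strategies) and matching the value uniformly. For $\text{HD}\subsetneq\text{DBP}$ I would lift a Boolean good-for-games automaton that is not determinizable by pruning to the quantitative setting. The incomparability of DBP and HD for alternating automata (\cref{fig:DbpInCmpHd}) I would witness in both directions with small hand-built alternating automata, one determinizable by pruning but not history deterministic and one the reverse.
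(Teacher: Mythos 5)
Most of your decomposition tracks the paper's: the monotonicity implications, $\text{DBP}\Rightarrow\text{HD}$ for nondeterministic automata, the equivalence of good for gameness with its threshold version via ``a value is determined by the thresholds it exceeds'', the composition of strategies for threshold $\text{HD}\Rightarrow\text{GFG}$, and the separating examples are all essentially the arguments used in \cref{cl:GFGiffTGFG,cl:THDimpliesTGFG,cl:NondetDbpImpHd,cl:THDnotHD,cl:HdNeqDbp,cl:DbpInCompHd}.

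There is, however, a genuine gap in your treatment of the key direction, good for gameness $\Rightarrow$ threshold history determinism. You identify Eve's $t$-threshold letter game with the product $G^{\mathrm{free}}\times\A$, where $G^{\mathrm{free}}$ lets Adam dictate the whole word, and then invoke good for gameness to equate the two threshold games. This fails for two reasons. First, the winning conditions do not match: in the $t$-threshold letter game Eve wins whenever $\A(w)<t$, regardless of the run she builds, whereas in the $t$-threshold game on $G^{\mathrm{free}}\times\A$ she must force $\Val(\rho)\geq t$ outright. Second, the value of $G^{\mathrm{free}}$ is $\inf_{w} f(w)$, which will typically be below $t$; then Eve loses the $t$-threshold game on $G^{\mathrm{free}}$ as well, and good for gameness applied to this game yields no contradiction. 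The paper's proof of \cref{cl:TGFGpartiallyImpliesTHD} circumvents both problems by arguing contrapositively: determinacy hands the spoiler (say Adam) a winning strategy $s$ in Eve's $t$-threshold letter game, and one builds the one-player game $G_s$ whose arena consists only of the words consistent with $s$. On this \emph{restricted} arena every word has value at least $t$ (because $s$ is winning, it only produces words with $\A(w)\geq t$ while forcing runs of value below $t$), so Eve wins the $t$-threshold game on $G_s$, while $s$ transfers to $G_s\times\A$ to defeat her there. This restriction of the arena to the plays of the spoiler's winning strategy is the idea your sketch is missing, and without it the implication does not go through. Two smaller omissions: you do not verify that the concrete value functions ($\Inf,\Sup,\LimInf,\LimSup,\DSum$, finite words) define Borel sets, which is needed to instantiate the determinacy hypothesis (\cref{cl:LetterDeterminedForBorelAutomata,cl:gfg-is-sometimes-thd}), and you leave aside the positive coincidences of history determinism and determinizability by pruning referenced in note 4a of the figure (\cref{sec:SometimesHdEqDbp}).
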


\begin{figure}[h]
	\centering
	\begin{tikzpicture}[->,>=stealth',shorten >=1pt,auto,node distance=0.75cm, semithick, initial text=, every initial by arrow/.style={|->},state/.style={circle, draw, minimum size=0.5cm}]
		
		\node (GFG) {Good For Gameness {\Large$=$\normalsize$^1$} Threshold Good For Gameness {\Large$\cong$\normalsize$^2$}  Threshold History Determinism};
		\node (M1) [below of = GFG] {} ;
		\node (R1) [left of = M1, xshift = 3.4cm, rotate = 70] {\Large $\subsetneq$} ;
		\node (R1N) [right of = R1, xshift = -0.4cm] {$^4$} ;
		\node  (Rn) [left of = R1, xshift = -0.4cm] {\footnotesize(for nondet.)};
		\node (R2) [right of = M1, xshift = 4.2cm, rotate = 110] {\Large$\subsetneq$} ;
		\node (R2N) [right of = R2, xshift = -0.45cm, yshift=0.1cm] {$^3$} ;
		\node (TDBP) [below of = R1, xshift=-2.5cm] {Threshold Determinizability by Pruning} ;
		\node (HD) [below of = R2, xshift=.6cm] {History Determinism} ;
		\node (R3) [below of = M1, xshift=3.6cm] {\Large$\neq$\normalsize$ ^5$} ;
		\node (M2) [below of = R3] {} ;
		\node (DBP) [below of = M2] {Determinizability by Pruning} ;
		\node (R4) [left of = M2, xshift =-0.5cm, rotate = 110] {\Large $\subsetneq$} ;
		\node (R4N) [right of = R4, xshift = -0.45cm, yshift=0.1cm] {$^3$} ;
		\node (R5) [right of = M2, xshift = 0.5cm, rotate = 70] {\Large $\subsetneq$} ;
		\node (R5N) [right of = R5, xshift = -0.4cm] {$^4$} ;
		\node  (Rn) [right of = R5, xshift = .6cm] {\footnotesize(for nondet.)};
		
		\node(Note1) [below of = DBP, xshift=-8.5cm, yshift = -0.1cm, align=left] {1. Always holds (\cref{cl:GFGiffTGFG}).}	;	
		\node(Note2) at (Note1.west) [anchor=west, yshift=-0.75cm] {2. The \!\!\! \OImplication implication always holds (\cref{cl:THDimpliesTGFG}); 
		The \!\!\! \Implication implication holds at least for};
		\node (Note2a) at (Note2.west) [anchor=west, yshift=-0.5cm] {~~~~all $\Val$ automata whose threshold letter games are determined (\cref{cl:TGFGpartiallyImpliesTHD}),};
		\node (Note2b) at (Note2a.west) [anchor=west, yshift=-0.5cm] {~~~~e.g., for $\Inf,\Sup,\LimInf,\LimSup,\DSum$ and all functions on finite words (\cref{cl:gfg-is-sometimes-thd}).};
		\node  (Note3) at (Note2b.west) [anchor=west, yshift=-0.75cm] {3. Strict containment for all non-trivial value functions with at least three values (\cref{cl:THDnotHD});};
		\node  (Note3a) at (Note3.west) [anchor=west, yshift=-0.5cm] {~~~~Equal (the same notion) for value functions with two values.}; 
		\node  (Note4) at (Note3a.west) [anchor=west, yshift=-0.75cm] {4. Strict containment, in general, for nondeterministic automata (\cref{cl:NondetDbpImpHd,cl:HdNeqDbp}); };
		\node  (Note4a) at (Note4.west) [anchor=west, yshift=-0.5cm] {~~~~Equivalent notions for some nondeterministic $\Val$ automata (\cref{sec:SometimesHdEqDbp});};
		\node  (Note4b) at (Note4a.west) [anchor=west, yshift=-0.5cm] {~~~~Incomparable for alteranting automata (\cref{cl:DbpInCompHd}).};
		\node  (Note5) at (Note4b.west) [anchor=west, yshift=-0.75cm] {5. Incomparable, in general, for value functions with at least three values};
		\node  (Note5a) at (Note5.west) [anchor=west, yshift=-0.5cm] {\,~~~(\cref{cl:THDnotHD,cl:HdNeqDbp,cl:DbpInCompHd});}; 
		\node  (Note5b) at (Note5a.west) [anchor=west, yshift=-0.5cm] {~~~~For value functions with two values, as relation $4$ above.}; 
	\end{tikzpicture}
	\caption{The relations between the different notions.}
	\label{fig:Comaprison}
\end{figure}

	Considering good for gameness, if an automaton $\A$ is good for all games then it is obviously good for all threshold games. The implication for the other direction stems from the fact that every concrete game $G$ has a single value $v$. Then for $G$, it is enough to be good for $v$-threshold games, and for all automata, it is enough to be good for all threshold games.
	
\begin{lemma}\label{cl:GFGiffTGFG}
	Good for Gameness \DualImplication Threshold Good for Gameness.
\end{lemma}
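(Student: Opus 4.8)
The plan is to prove the two implications separately, both resting on a single elementary observation: for any determined game $H$ with value $v_H$ and any threshold $t\in\Reals$, Eve wins the $t$-threshold game on $H$ precisely when $v_H\geq t$. This is immediate from the definition of threshold games given in the preliminaries. Since every game in scope is assumed determined -- in particular every product $G\times\A$, as the very statements of good for gameness and good for threshold gameness refer to the value of $G\times\A$ and to who wins the threshold game on $G\times\A$ -- both the value and the threshold-winner of $G\times\A$ are well defined throughout, so no extra determinacy argument is required.

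For the forward implication I would assume $\A$ is good for games and fix an arbitrary determined game $G$ with payoff function $f$. By assumption $G$ and $G\times\A$ share a common value $v$. Applying the observation above to both games, for every $t\in\Reals$ Eve wins the $t$-threshold game on $G$ iff $v\geq t$ iff Eve wins the $t$-threshold game on $G\times\A$. Hence $\A$ is good for $t$-threshold games for every $t$, i.e.\ good for threshold games.

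For the backward implication I would assume $\A$ is good for threshold games, fix a determined game $G$, and write $v$ for the value of $G$ and $v'$ for the value of $G\times\A$. For each $t\in\Reals$, being good for the $t$-threshold game says Eve wins the $t$-threshold game on $G$ iff she wins it on $G\times\A$, which by the observation translates into the equivalence $v\geq t \iff v'\geq t$. The crux is then to conclude $v=v'$ from this equivalence holding for \emph{all} real $t$: if $v\neq v'$, say $v<v'$, then any $t$ with $v<t\leq v'$ satisfies $v'\geq t$ but not $v\geq t$, contradicting the equivalence. Thus $v=v'$, so $G$ and $G\times\A$ have the same value and $\A$ is good for games.

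I expect the main (and essentially only) difficulty to be bookkeeping rather than mathematical depth: one must keep track of the fact that the product games are determined so that "value of $G\times\A$'' and "Eve wins the $t$-threshold game on $G\times\A$'' are meaningful, and then invoke the standard order-theoretic fact that a real number is determined by the set of thresholds it dominates. Since both ingredients are routine, the proof should be short.
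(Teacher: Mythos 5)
Your proof is correct and follows essentially the same route as the paper's: the forward direction via the shared value $v$, and the backward direction by noting that the equivalence ``Eve wins the $t$-threshold game on $G$ iff on $G\times\A$'' for all $t$ pins down the two values to be equal (the paper instantiates $t$ at the two specific values $v$ and $v'$ rather than picking a separating $t$, but this is the same order-theoretic fact). No gaps.
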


For a $t$-threshold history deterministic automaton $\A$, Eve and Adam have strategies to win their $t$-letter games on $\A$. Thus, whenever Eve or Adam win some $t$-threshold game $G$, they can combine their two winning strategies to win $G \times \A$.
\begin{lemma}\label{cl:THDimpliesTGFG}
Threshold History Determinism \Implication Threshold Good for Gameness
\end{lemma}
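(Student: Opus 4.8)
The plan is to fix a threshold $t\in\Reals$ and show that $\A$ is good for $t$-threshold games; ranging over all $t$ then gives threshold good for gameness. Concretely, for an arbitrary determined game $G$ with payoff $f$, I would prove that the value of $G$ is at least $t$ if and only if the value of $G\times\A$ is at least $t$ (using throughout that $G\times\A$ is determined, so that its value is defined). The two directions are handled symmetrically, each by composing a strategy witnessing the value of $G$ with the corresponding letter-game strategy supplied by history determinism. The guiding observation is that $G\times\A$ presents to each player exactly the adversarial interface of the letter game: the letters of the generated word $w$ arrive online from the $G$-part of the play, and the opponent's resolutions of nondeterminism and universality are arbitrary. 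Since a letter-game strategy is winning against all letter sequences and all such opponent choices, it transports verbatim into $G\times\A$, where it is applied on the $\A$-component while the $G$-component is handled by the value-witnessing strategy.

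For the direction starting from value below $t$, Adam has a strategy $\rho$ in $G$ with $\sup_{p\in\rho} f(p)<t$, so every word it yields has $\A(w)<t$, uniformly. Taking Adam's winning strategy $\sigma_A$ in his $t$-threshold letter game (which guarantees $\A(w)<t \implies \Val(\pi)<\A(w)$), I would let Adam play $\rho$ on his $G$-positions and $\sigma_A$ on the conjunctions. Every resulting play then satisfies $\Val(\pi)<\A(w)\le \sup_{p\in\rho}f(p)<t$, so the value of $G\times\A$ is below $t$. This is precisely where the deliberate strictness asymmetry ($<\A(w)$ rather than $<t$) earns its keep: it delivers a bound uniformly bounded away from $t$, so no boundary issue arises.

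The converse is the delicate step and the one I expect to be the main obstacle. From value at least $t$ and Eve's winning strategy $\sigma_E$ in her $t$-threshold letter game (guaranteeing $\A(w)\ge t \implies \Val(\pi)\ge t$), the composition works immediately whenever Eve has a single $G$-strategy securing $\A(w)\ge t$, i.e.\ whenever the value of $G$ is attained or strictly exceeds $t$. The genuine difficulty is the boundary case where the value equals $t$ but is only approached, so every $G$-strategy admits words with $\A(w)$ just below $t$, for which $\sigma_E$ promises nothing. To overcome this I would use that $\A$ is history deterministic at every threshold, not merely at $t$: for each $\epsilon>0$ choose a $G$-strategy $\tau_\epsilon$ for Eve with $\inf_{p\in\tau_\epsilon} f(p)>t-\epsilon$ together with Eve's winning strategy $\sigma_E^{t-\epsilon}$ in her $(t-\epsilon)$-threshold letter game. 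Playing $\tau_\epsilon$ on the $G$-positions and $\sigma_E^{t-\epsilon}$ on the disjunctions forces $\A(w)\ge t-\epsilon$ and hence $\Val(\pi)\ge t-\epsilon$ on every play, so the value of $G\times\A$ is at least $t-\epsilon$; letting $\epsilon\to 0$ gives value at least $t$. This closes the equivalence for each $t$ and thus establishes the implication.
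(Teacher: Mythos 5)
Your proof is correct and follows the same strategy-composition approach as the paper: Eve (resp.\ Adam) plays a value-witnessing strategy on the $G$-component and the corresponding letter-game strategy on the $\A$-component, the key point being that a letter-game strategy is winning against arbitrary online letter sequences and hence against the one generated in the product play. Your $\epsilon$-refinement for the boundary case where the value $t$ of $G$ is only approached (composing with the $(t-\epsilon)$-threshold letter-game strategies and letting $\epsilon\to 0$) addresses a point the paper's proof glosses over by assuming Eve has a single strategy guaranteeing $\geq t$; it is a legitimate and welcome extra step, available precisely because the hypothesis supplies history determinism at every threshold rather than only at $t$.
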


For the other direction, we generalize proofs from \cite{BKKS13,BL19}: assuming that the automaton $\A$ is not threshold history deterministic we construct a threshold game $G$ with respect to which $\A$ is not good for composition (namely, the product of $G$ with $\A$ does not have the same winner as $G$). However, to build this game, we assume that either Adam wins Eve's letter game on $\A$ or Eve wins Adam's letter game on $\A$, that is, we assume that the letter games on $\A$ are determined. 
We later show that this determinacy requirement holds for all the specific value functions that we consider in the paper, except for $\LimInfAvg$ and $\LimSupAvg$, on which we leave the determinacy question open.

\begin{lemma}\label{cl:TGFGpartiallyImpliesTHD}
For $\Val$ automata whose threshold letter games are determined, Threshold Good for Gameness \Implication Threshold History Determinism.
\end{lemma}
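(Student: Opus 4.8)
The plan is to argue by contraposition, using the determinacy hypothesis to turn a player's failure to win a letter game into a win for the opponent. Suppose $\A$ is not threshold history deterministic, so there is a threshold $t$ at which $\A$ is not $t$-threshold history deterministic; by definition this means that Eve fails to win her $t$-threshold letter game, or Adam fails to win his. Here the determinacy assumption enters: since the $t$-threshold letter games are determined, the first failure upgrades to \emph{Adam wins Eve's $t$-threshold letter game}, and the second to \emph{Eve wins Adam's $t$-threshold letter game}. In each case I will build a determined game $G$, with $\Sigma$-labelled arena and payoff $f$, for which $G$ and $G\times\A$ have different winners at a suitable threshold, contradicting that $\A$ is good for threshold games.

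Take first the clean case, where Eve wins Adam's $t$-threshold letter game via a strategy $\sigma$ that supplies the letters and resolves $\A$'s nondeterminism, forcing, against every resolution of the universality by Adam, a word $w$ with $\A(w)<t$ together with a run $\pi$ with $\Val(\pi)\ge t$. I would let $G$ be the sub-arena of Adam's letter game in which Eve plays $\sigma$: Eve's letter- and disjunction-choices are fixed by $\sigma$, the only remaining choices are Adam's conjunction-resolutions, the edges carry the letters $\sigma$ emits, and the payoff is $f$; since only Adam chooses, $G$ is determined. Every play of $G$ reads a word of value $<t$, so the value of $G$ is an infimum of reals all below $t$, hence $<t$, and Eve loses the $t$-threshold game on $G$. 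In $G\times\A$, by contrast, a play re-enacts a play of Adam's letter game in which Eve follows $\sigma$ and Adam resolves the universality of the fresh copy of $\A$, so $\sigma$ forces the run value to be at least $t$, the value of $G\times\A$ is $\ge t$, and Eve wins the $t$-threshold game on $G\times\A$. The two games thus have different winners.

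The dual case, where Adam wins Eve's $t$-threshold letter game via a strategy $\tau$ that supplies the letters and resolves the universality, forcing $\A(w)\ge t$ but $\Val(\pi)<t$, is symmetric: I would take $G$ to be the sub-arena of Eve's letter game fixed by $\tau$, whose plays all read words of value $\ge t$ (so the value of $G$ is $\ge t$), while in $G\times\A$ Adam, following $\tau$, forces every run value below $t$. The finite-word variants are handled identically, reading the winning conditions as the stated prefix conditions ``for all $i$''.

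I expect two points to carry the real weight. The first is verifying that $G\times\A$ faithfully re-enacts the letter game: $G$ must be defined from the fixed strategy so that, in the product, the run built against the \emph{fresh} copy of $\A$ is exactly the run that $\sigma$ (resp.\ $\tau$) is designed to respond to, i.e.\ the nondeterministic and universal choices of the product are synchronised with the choices already committed in $G$ rather than decoupled from them; this is precisely the step that generalises the Boolean constructions of \cite{BKKS13,BL19}, and for alternating automata it must be carried out for both players' choices at once. The second and more delicate point is that the two game values must land on opposite sides of the threshold. In the first case above this is immediate, because Eve forces the run value $\ge t$ (a non-strict bound that survives taking an infimum) while forcing the word value $<t$; it is exactly the asymmetry between the non-strict and strict winning conditions of the two letter games that makes this work. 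In the dual case the forced run values are only strictly below $t$, so one must argue that their supremum stays genuinely below $t$ (equivalently, separate the values by an intermediate threshold $t'$); this is the main obstacle, and it is discharged using the structure guaranteed by the determinacy hypothesis (for the concrete value functions, attainment of the run-value supremum). The determinacy hypothesis itself is used once and only to split into the two strategy-based cases.
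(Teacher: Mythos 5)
Your high-level plan is the paper's plan: argue by contraposition, use determinacy to upgrade ``not history deterministic at threshold $t$'' to a winning strategy for the opposing player in one of the two letter games, bake that strategy into a one-player game $G$, and show that $G$ and $G\times\A$ have different winners at threshold $t$. The gap is in how you build $G$: you fix the letter-game \emph{winner}'s choices into the arena and leave the \emph{opponent}'s choices as the branching of $G$. In the product $G\times\A$ that opponent then controls choices in two decoupled places --- the branching of $G$ (which determines which word the baked-in strategy emits) and their own resolutions in the fresh copy of $\A$ (which determine the run) --- and the transplanted strategy's guarantee only covers plays where word and run are built consistently. This is not a synchronisation detail to be checked; it breaks the construction. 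Concretely, take your dual case with $\A$ nondeterministic (the only case that arises there, since for nondeterministic $\A$ every run $\pi$ on $w$ satisfies $\Val(\pi)\leq\A(w)$, so Eve never wins Adam's threshold letter game): the branching of your $G$ consists of Eve's disjunction choices, so in $G\times\A$ Eve steers $G$ to any word $w$ it offers (all of which satisfy $\A(w)\geq t$) while independently building a near-optimal run on $w$ in the fresh copy of $\A$. The product then has value $\geq t$, Eve wins both threshold games, and no contradiction with good-for-gameness is obtained. The paper's $G_s$ instead gives \emph{all} positions to the letter-game winner and takes as positions just the reachable finite words; in the product that player recomputes their letter (and conjunction) choices from the run actually being built in the fresh copy, so the opponent has nowhere to decouple. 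Your first case has the mirror-image defect for alternating automata: Adam owns both $G$'s conjunction branching and the fresh copy's conjunctions, and by playing them differently he escapes $\sigma$'s guarantee.

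A secondary issue: the ``clean'' value comparison you get in your first case (an infimum of word values all below $t$, hence below $t$) is an artefact of the wrong ownership of $G$; once $G$ is owned by the letter-game winner, the delicate inequality appears on the \emph{run} side in the case where Adam wins Eve's letter game --- he forces every play of the product to have value strictly below $t$, but the game value is a supremum of such values and is a priori only $\leq t$. Your proposed fix, that this ``is discharged using the structure guaranteed by the determinacy hypothesis,'' is not an argument: determinacy of the letter games says nothing about attainment of that supremum. (The paper's own write-up also passes over this point; to close it one should observe that threshold good-for-gameness quantifies over all thresholds, so it suffices that the values of $G_s$ and $G_s\times\A$ differ, and handle the residual case where both equal $t$ separately or under an attainment assumption on $\Val$.)
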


\begin{proof}
	Consider a $\Val$ automaton $\A$ whose threshold letter games are determined. Then, if $\A$ is not threshold history deterministic, it follows that Adam wins Eve's $t$-letter game on $\A$ for some threshold $t$, or Eve wins Adam's $t$-letter game on $\A$ for some $t$. We show below that in both cases $\A$ is not good for threshold games, proving the contra-positive of the claim.
	
	Assume that Adam wins Eve's $t$-letter game $G_{\A,t}$ on $\A$ for some threshold $t$ with a strategy $s$.
	We can build a one-player $\Sigma$-labelled (infinite) game $G_s$ in which the positions, which all belong to Adam, are the finite words that can be constructed along plays of $G_{\A,t}$ that agree with $s$, and where for every positions $u$ and $u\con a$, there is an $a$-labelled edge from the position $u$ to the position $u\con a$. The empty word $\varepsilon$ is the initial position. In other words, this is the one-player arena in which plays correspond to (infinite) words that occur in the letter game if Adam uses the strategy $s$. Notice that since $s$ is a winning strategy in the $t$-letter game, all words $w$ that are plays of $G_s$ have $\A(w)\geq t$. The $t$-threshold game on $G_s$ is therefore winning for Eve.
	
	We now argue that Adam wins the product game $G_s\times \A$. Indeed, Adam can now use the strategy $s$ to choose directions in $G_s$ according to the run constructed so far in $\A$, and resolve conjunctions in $\A$ according to the history of the word and run so far. Since $s$ is a winning strategy for Adam in the letter game, this guarantees that the resulting run  $\rho$ is such that $\Val(\rho)<t$. Then $\A$ is not threshold-good-for-games, as witnessed by $G_s$.
	
	By a similar argument, if Eve wins Adam's $t$-letter game for some $t$ with a strategy $s$, then we can construct a one-player game $G_s$ in which all positions belong to Eve such that $G_s$ is winning for Adam (i.e., all words have value strictly smaller than $t$), but in the product $G\times \A$, Eve wins, i.e., can force value at least $t$.
	
	Hence if either player has a winning strategy in the other player's threshold letter game for some threshold, then the automaton is not good for threshold games.
\end{proof}

We now show that letter games on $\Val$ automata whose threshold variants define Borel sets are determined. This stems from the fact that their winning condition is a union between two conditions that can be defined by threshold $\Val$ automata or their complement.

\begin{proposition}\label{cl:LetterDeterminedForBorelAutomata}
	If for some value function $\Val$, all threshold $\Val$ automata define Borel sets, then threshold letter games on $\Val$ automata are determined.
\end{proposition}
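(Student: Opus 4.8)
The plan is to reduce everything to Martin's Borel determinacy theorem, so the real work is showing that the winning conditions of the (threshold) letter games are Borel subsets of the space of plays. First I would describe that play space. A play of a letter game is a sequence of configurations in which the letters come from the finite set $\Sigma$ and each resolution of a disjunction or conjunction is a choice among the finitely many immediate subformulas of a transition condition; since every transition condition has finite depth, only finitely many $\epsilon$-steps separate consecutive letters. Hence the set of plays embeds as a closed subset of a countable product of finite sets, which is compact metrizable (in particular Polish), and the two maps sending a play to the generated word $w\in\Sigma^\omega$ and to the generated run $\pi=t_0 t_1 \ldots$ (a sequence over the finite set of $\A$'s transitions) are continuous.

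Next I would show that the two kinds of ``atom'' out of which the winning conditions are built are Borel. The condition on the word, $\A(w)\geq t$, defines exactly the language of the $t$-threshold $\Val$ automaton of $\A$, which is Borel by hypothesis; pulling it back along the continuous map $\mathrm{play}\mapsto w$ keeps it Borel, and its complement $\{\A(w)<t\}$ is Borel as well. For the condition on the run, $\Val(\pi)\geq t$, the key observation is that $\Val(\pi)$ depends only on the weight sequence $\gamma(\pi)$, which ranges over $W^\omega$ for the finite set $W=\{\gamma(t)\mid t \text{ a transition of } \A\}$. I would then introduce the one-state deterministic $\Val$ automaton $\B$ over the alphabet $W$ with a self-loop $\trans{q}{v:v}{q}$ for each $v\in W$; then $\B(v)=\Val(v)$, so the $t$-threshold language of $\B$ is exactly $\{v\in W^\omega\mid \Val(v)\geq t\}$, which is Borel by hypothesis. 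Pulling this back along the continuous map $\mathrm{play}\mapsto\gamma(\pi)$ shows $\{\Val(\pi)\geq t\}$ is Borel.

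With these atoms in hand, Eve's winning condition in the $t$-threshold letter game, $\{\A(w)<t\}\cup\{\Val(\pi)\geq t\}$, is a union of two Borel sets, hence Borel. Adam's winning condition is again a Boolean combination of the same atoms; where the comparison involves $\A(w)$ directly (e.g.\ a set of the form $\{\Val(\pi)<\A(w)\}$), I would eliminate the moving threshold using density of the rationals, writing it as the countable union $\bigcup_{q\in\Rat}\bigl(\{\Val(\pi)<q\}\cap\{\A(w)\geq q\}\bigr)$, which is Borel. Applying Martin's Borel determinacy theorem to each game then yields that all the threshold letter games are determined.

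Finally, I would flag the main obstacle. The delicate point is not Martin's theorem itself but the reduction of the run-value condition $\Val(\pi)\geq t$ to the Borel hypothesis: one must notice that the hypothesis, which a priori constrains only the word-to-value map, applies to the weight sequence of a run once that sequence is read by the trivial automaton $\B$. Care is also needed to check that interleaving $\epsilon$-moves with letter moves preserves the continuity of the projections to $w$ and $\pi$, and that Adam's asymmetric strict-inequality condition is treated via the countable-union trick rather than as a single atom.
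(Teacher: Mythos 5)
Your proof is correct and follows essentially the same route as the paper's: both reduce to Martin's Borel determinacy by expressing each winning condition as a Boolean combination of the threshold language of $\A$ itself (for the word component) and the threshold language of a deterministic weight-reading $\Val$ automaton (for the run component), both Borel by hypothesis. One welcome addition: where the paper dismisses Adam's game as ``analogous'', you correctly note that his condition $\Val(\pi)<\A(w)$ involves a moving threshold and requires the countable decomposition $\bigcup_{q\in\Rat}\bigl(\{\Val(\pi)<q\}\cap\{\A(w)\geq q\}\bigr)$ --- a detail the paper leaves implicit.
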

\begin{proof}
	Consider Eve's $t$-letter game on a $\Val$ automaton $\A$, for some threshold $t\in\Reals$. A play of the game generates a sequence $\rho\in (\Sigma \times V)^\omega$, where $\Sigma$ is $\A$'s alphabet and $V$ is the finite set of its weights. We may view $\rho$ as a pair of sequences $(\rho_\Sigma, \rho_V)$, where $\rho_\Sigma\in\Sigma^\omega$ and $\rho_V\in V^\omega$. Then the winning set of Eve is $\{ \rho \St \Val(\rho_V) \geq t \text{ or }  \A(\rho_\Sigma)< t\}$.

Observe that the set $S_V =  \{ \rho \St \Val(\rho_V)\geq t\}$ can be defined by a $t$-threshold deterministic $\Val$ automaton $\B$, in which the weight of a transition over the input letter $(\sigma, v)$ is $v$. 
Let $\A'$ be a $t$-threshold $\Val$ automaton that is identical to $\A$, except that its alphabet is $\Sigma\times V$, while the transitions are sensitive, as in $\A$, only to the $\Sigma$ component of the input. Then the set $S_\Sigma = \{ \rho \St \A(\rho_\Sigma)\geq t\}$ is defined by $\A'$.

As the winning condition of Eve's letter game is the union of $S_V$ and the complement of $S_\Sigma$, and as both are Borel sets, so is the winning condition. Hence, by \cite{Mar75} the game is determined.

The argument regarding Adam's letter game is analogous.
\end{proof}

A direct corollary of \cref{cl:LetterDeterminedForBorelAutomata} is that for most of the common quantitative automata, we have that good for gameness is equivalent to threshold history determinism. In particular, this is the case for all the concrete value functions that are considered in this paper, except for $\LimInfAvg$ and $\LimSupAvg$, for which we leave this equivalence question open.
(While threshold limit-average functions define Borel sets, and therefore limit-average (mean-payoff) games are determined, we are not aware of a result showing that \emph{nondeterministic} threshold limit-average automata define Borel sets. It is easy to see that deterministic such automata do, but not necessarily nondeterministic ones. For nondeterministic discounted-sum automata, we prove in \cref{cl:gfg-is-sometimes-thd} that they define Borel sets by relying on the continuity of discounted-summation, which does not hold for limit average.)

\begin{theorem}\label{cl:gfg-is-sometimes-thd}
Good For Gameness \DualImplication Threshold History Determinism for all $\Val$ automata on finite words, and $\Inf,\Sup,\LimInf,\LimSup$ and $\DSum$ automata on infinite words.
\end{theorem}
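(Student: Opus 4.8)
The plan is to reduce the claimed equivalence to the determinacy of the threshold letter games and then to verify that determinacy for each value function in the list. For the reduction, note that by \cref{cl:GFGiffTGFG} good for gameness and threshold good for gameness coincide, by \cref{cl:THDimpliesTGFG} threshold history determinism implies threshold good for gameness, and by \cref{cl:TGFGpartiallyImpliesTHD} the converse implication holds whenever the threshold letter games of $\A$ are determined. Hence, for every value function whose threshold letter games are determined, good for gameness, threshold good for gameness and threshold history determinism all coincide, which in particular gives good for gameness $\iff$ threshold history determinism. So the whole theorem follows once the threshold letter games are shown to be determined for each listed value function.

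For the infinite-word functions $\Inf,\Sup,\LimInf,\LimSup$ I would invoke \cref{cl:LetterDeterminedForBorelAutomata}: it suffices that every threshold $\Val$ automaton defines a Borel set. Since an automaton has finitely many transitions, and hence finitely many weights, for each fixed threshold $t$ the condition $\Val(\gamma(\pi))\geq t$ on a run becomes an $\omega$-regular condition on the sequence of transitions --- a safety condition for $\Inf$ (never drop below $t$), a reachability/B\"uchi condition for $\Sup$, a co-B\"uchi condition for $\LimInf$ (eventually avoid weights below $t$), and a B\"uchi condition for $\LimSup$ (some weight $\geq t$ infinitely often). Consequently the threshold language $\{w \mid \A(w)\geq t\}$ is recognised by a (possibly alternating) $\omega$-regular automaton, so it is $\omega$-regular and in particular Borel, and determinacy of the letter games follows from \cref{cl:LetterDeterminedForBorelAutomata}.

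For $\DSum$ on infinite words the $\omega$-regular argument is unavailable, and establishing Borelness here is the main obstacle; I would exploit the continuity of discounted summation. Writing $M$ for the (finite) bound on the absolute value of the weights of $\A$, truncating a discounted sum after its first $n$ terms alters its value by at most $M\lambda^n/(1-\lambda)$, which tends to $0$. As this bound holds uniformly over all plays, it also bounds the change in the value of the game $G(\A,w)$ when its payoff is truncated at depth $n$; and the truncated value depends only on the first $n$ letters of $w$. It follows that $w\mapsto\A(w)$ is uniformly continuous, so each threshold set $\{w \mid \A(w)\geq t\}$ is the preimage of the closed set $[t,\infty)$ and hence closed, in particular Borel, and \cref{cl:LetterDeterminedForBorelAutomata} again yields determinacy. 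This argument is uniform over nondeterministic and alternating automata (the nondeterministic case being the one where all choices belong to Eve), and it is precisely here that discounting is used: the analogous tail bound fails for limit average, which is why determinacy --- and with it the equivalence --- is left open for $\LimInfAvg$ and $\LimSupAvg$.

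Finally, for automata on finite words \cref{cl:LetterDeterminedForBorelAutomata} does not directly apply, so I would argue the determinacy of the letter games from scratch. A play of a finite-word letter game is still an infinite sequence of configurations, and Eve's $t$-threshold winning condition requires, for every index $i$, that $\A(w[0..i])\geq t$ imply $\Val(\pi[0..i])\geq t$. Each such requirement depends only on the length-$(i{+}1)$ prefixes of the generated word and run and is therefore clopen; the winning set is their countable intersection and hence closed, and symmetrically for Adam's letter game. Closed (and more generally Borel) winning conditions are determined by \cite{Mar75}, which completes the finite-word case and thereby the proof.
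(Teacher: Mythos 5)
Your proposal is correct and follows the paper's own architecture: combine \cref{cl:GFGiffTGFG}, \cref{cl:THDimpliesTGFG} and \cref{cl:TGFGpartiallyImpliesTHD} to reduce everything to determinacy of the threshold letter games, then establish that determinacy value function by value function via Borelness of the threshold languages and \cref{cl:LetterDeterminedForBorelAutomata}. Two of your cases are executed differently from the paper, and arguably more cleanly. For $\DSum$, the paper defines an auxiliary Borel set $B_t$ (``for every $n$ there are arbitrarily long prefixes with value at least $t-\frac{1}{n}$'') and proves $B_t=A_t$ by a K\"onig's-lemma extraction of a limit run; you instead show that $w\mapsto\A(w)$ is uniformly continuous on the Cantor space (the tail bound $M\lambda^n/(1-\lambda)$ applies uniformly to all plays, hence to the game value), so that $A_t$ is the preimage of $[t,\infty)$ and is closed. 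This avoids the limit-run construction entirely and isolates exactly where discounting is used, matching the paper's remark that the argument fails for limit average. For finite words, the paper simply notes that the threshold language is a countable union of singletons and invokes the Borel proposition (which is phrased for infinite-word plays); your direct argument --- each index-$i$ requirement of the letter game is clopen, the winning set is a closed countable intersection, and closed games are determined --- is a touch more careful about how determinacy of the finite-word letter game is actually obtained. For $\Inf$ and $\Sup$ you use $\omega$-regularity (safety/reachability) where the paper writes the threshold sets explicitly as countable intersections/unions of closed/open sets; these are equivalent observations. No gaps.
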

\begin{proof}
	It is enough to show that threshold automata of these types define Borel sets, and then the claim  directly follows from \cref{cl:GFGiffTGFG,cl:THDimpliesTGFG,cl:LetterDeterminedForBorelAutomata}.	
\begin{description}
	\item[Automata on finite words.] Every threshold $\Val$ automaton on finite words defines a set of finite words, which is a countable union of singletons and thus a Borel set.
	
	\item[$\Inf$ and $\Sup$ automata.]  Observe that $\Inf,\Sup$ automata on infinite words are ``almost'' like automata on finite words, in the sense that the value of the automaton on a word is equal to its value on some prefix of the word. 
	Formally, for a $\Sup$ automaton $\A$, we have that the set of infinite words $\{w \in \Sigma^\omega \St \A(w)\geq t\}$ is equal to the set of infinite words $\{ w \in \Sigma^\omega \St \text{ exists } p \in \Nat \text{ such that } \A(w[..p]) \geq t\}$ (when considering $\A$ to operate on finite words). Observe that it is indeed a Borel set, since it is a countable union of open sets. The argument for $\Inf$-automata is analogous, having a countable intersection of closed sets.
	
	\item[$\LimInf$ and $\LimSup$ automata.] Observe that threshold $\LimInf$ and $\LimSup$ automata are equivalent to coB\"uchi and B\"uchi automata, respectively, thus defining $\omega$-regular languages, which are known to be Borel sets \cite{Tho97}.
	
	\item[$\DSum$ automata.]  For $\DSum$ automata the argument stems from the continuity with respect to the Cantor topology of functions defined by $\DSum$ automata.
	
	Consider a $\DSum$ automaton $\A$  and a threshold $t\in\Reals$. 
	Define the following set of infinite words $B_t = \{ w \in \Sigma^\omega \St \text{ for every } n \in \Nat \text{ and } p_0 \in \Nat, \text{ there exists } p>p_0, \text{ such that } \A(w[..p]) \geq t-\frac{1}{n}\}$ (when considering $\A$ to operate on finite words).
	
	Observe that $B_t$ is a Borel set, since $\{w \St \A(w[..p]) \geq t-\frac{1}{n}\}$ is an open set, and the existential and universal quantifiers can be defined by countable unions and intersections.
	
	We claim that $B_t$  is equivalent to the set $A_t = \{w \St \A(w) \geq t\}$, which will prove the required statement. One direction is immediate -- if a word $w$ is in $A_t$ then by the definition of $\A(w)$, there are runs of $\A$ on $w$ whose supremum is at least $t$, admitting the membership of $w$ in $B_t$. 
	
	As for the other direction, we show that for every $n \in \Nat$, there is a run $r$ of $\A$ on $w$, such that $\Val(r)\geq t-\frac{1}{n}$, proving that $w$ is in $A_t$. (One can then even combine these runs to create a single run that attains a value at least $t$.)
	Consider some $n \in \Nat$, and let $R$ be the infinite set of finite runs $r_1, r_2, \ldots$ that witness the membership of $w$ in $B_t$ with respect to $2n$. That is, $r_i$ is a run on a prefix of $w$ of length at least $i$, whose value is at least $t-\frac{1}{2n}$. We create a single run $r$ from $R$ in a ``Konig's lemma'' approach (for simplicity, we detail the construction for a nondeterministic automaton, and later explain how to extend it to an alternating automaton): 
	
	We choose the first transition $t_1$ in $r$ to be a transition that appears as the first transition in infinitely many $r_i$'s. We then choose the next transition $t_2$ to be a transition that appears as the second transition, where $t_1$ is the first transition, in infinitely many $r_i$'s, and so on.
	Notice that $r$ is indeed a run of $\A$ and its value is at least $t-\frac{1}{n}$: By the discounted-sum value function, if the value of a long enough prefix is at least $t-\frac{1}{2n}$, the value of the entire run cannot be smaller than $t-\frac{1}{n}$.
	
	Now, for an alternating automaton, rather than ``choosing transitions'' we need to ``resolve the nondeterminism'', while ensuring that the choice we make appears  in infinitely many runs after the previous nondeterministic and \emph{universal} choices that were already made.
\end{description}
\end{proof}

History determinism and determinizability by pruning obviously imply their threshold versions; \cref{fig:THDnotHD} demonstrates that the converse does not hold.
\begin{lemma}\label{cl:THDnotHD}\
	\begin{itemize}
		\item History Determinism \NonDualImplication Threshold History Determinism; 
		\item Determinizability by Pruning \NonDualImplication Threshold Determinizability by Pruning;
		\item History Determinism \NoImplicationR Threshold Determinizability by Pruning
	\end{itemize}
\end{lemma}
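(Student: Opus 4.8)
My plan is to prove all three non-implications by exhibiting a single concrete automaton (or a small family) that separates the notions, following the pattern established in \cref{fig:HdInfiniteNotFinite}. The forward implications (history determinism \Implication threshold history determinism, and determinizability by pruning \Implication threshold determinizability by pruning) are immediate: if Eve and Adam win the letter games aiming for exact value-equivalence $\Val(\pi)\geq\A(w)$ and $\Val(\pi)\leq\A(w)$, then for any fixed $t$ the weaker threshold winning conditions $\A(w)\geq t \implies \Val(\pi)\geq t$ and $\A(w)<t \implies \Val(\pi)<\A(w)$ are satisfied a fortiori, so the same strategies work; similarly a genuine equivalent deterministic pruning is in particular a $t$-threshold-correct pruning for every $t$. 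Likewise, a deterministic pruning witnessing full determinizability immediately witnesses (threshold) history determinism, which gives the easy direction of the third item. So the entire content lies in the \emph{converses failing}, i.e.\ constructing a witness automaton that is threshold history deterministic (and threshold determinizable by pruning) but \emph{not} history deterministic (and not determinizable by pruning).

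The key idea for the separating example is that threshold conditions only require Eve to certify membership above \emph{each} fixed threshold $t$ one at a time, whereas full history determinism demands a single run that is simultaneously optimal against the true value $\A(w)$. The plan is to build a nondeterministic automaton over a small alphabet in which, for every individual threshold $t$, Eve has a letter-game strategy that guarantees reaching value $\geq t$ whenever the word's value is $\geq t$ --- but where no \emph{uniform} strategy can, on the fly, match the exact value of the word, because the word's value is only revealed asymptotically and the correct initial nondeterministic choice depends on information Eve does not yet have. (Compare the mechanism in \cref{fig:HdInfiniteNotFinite}, where the optimal first transition differs between words of length one and two.) I would design the weights so that the supremum of the run values equals the word value for each threshold class, making it threshold history deterministic, while forcing an early irrevocable choice that cannot be correct for all words of a given threshold value. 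The same automaton should be threshold determinizable by pruning: for each $t$ one prunes to a deterministic sub-automaton whose threshold-$t$ language matches $\A$, but no single pruning is correct across all thresholds, yielding the failure of ordinary determinizability by pruning and hence, together, the third item.

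The main obstacle I expect is getting one automaton to simultaneously witness all three separations with a clean argument, rather than juggling several examples. In particular, verifying that the candidate is threshold history deterministic requires exhibiting, for each $t$, winning strategies for \emph{both} Eve and Adam in the $t$-threshold letter games (recall the definition asks that both players win), and checking Adam's asymmetric $<$/$\leq$ winning condition carefully; this is the fiddly part. Conversely, proving the automaton is \emph{not} history deterministic means showing Adam wins Eve's (non-threshold) letter game, i.e.\ exhibiting an Adam strategy that defeats \emph{every} Eve strategy by steering toward a word whose exact value Eve's committed run undershoots. I would structure the argument by first fixing the weights, then computing $\A(w)$ explicitly as a function of the word, then reading off the threshold strategies and the global-game counterexample directly from that formula. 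I expect the cleanest realization to use a $\DSum$ or finite-word value function so that values are computed by transparent finite expressions, keeping the strategy verifications routine once the weights are pinned down.
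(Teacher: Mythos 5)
Your proposal follows essentially the same route as the paper: the forward implications are dismissed exactly as you say, and the non-implications are witnessed by a single nondeterministic automaton with three values ($0,1,2$) in which an irrevocable initial nondeterministic choice must be made between a branch guaranteeing value $1$ and a branch yielding $2$ or $0$ depending on the next letter --- precisely the mechanism you describe, realized concretely by the automata of \cref{fig:THDnotHD} for $\Sup$/$\Sum$/$\DSum$ (and a variant for the limit value functions). The only step you leave open is actually writing down the weights, and the verification is as routine as you predict; in particular, for nondeterministic automata Adam's letter games are won trivially (every run's value is at most $\A(w)$ by definition), so the ``fiddly part'' you anticipate does not arise.
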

\begin{proof}
	The implications are straightforward: a winning strategy for each player in their letter game is also a winning strategy in their $t$-threshold letter game, for every threshold $t\in \Reals$; further, if an automaton $\A'$ that results from pruning $\A$ is equivalent to $\A$, then for every threshold $t$ and word $w$, if $\A(w) \geq t$ then $\A'(w)\geq t$.
	
	As for the non-implications, \cref{fig:THDnotHD} provides such counter examples, which hold, with some variations, with respect to every non-trivial value function with at least 3 values, and in particular with respect to all value functions discussed in the paper. 
	
	Consider, for example, the automaton $\A$ of \cref{fig:THDnotHD} with respect to the $\Sup$ value function. It is not history deterministic, since if the  nondeterminism in $q_0$ is resolved by going to $q_1$, the resulting automaton is not equivalent to $\A$ with respect to the finite word $aa$ and infinite word $a^\omega$, and if it is resolved by going to $q_2$, the resulting automaton fails on $ab$ and $ab^\omega$.
	
	On the other hand, $\A$ is threshold determinizable by pruning and threshold history deterministic: For a threshold up to $1$, the nondeterminism is resolved by going to $q_1$ and for the threshold $2$ by going to $q_2$.
\end{proof}

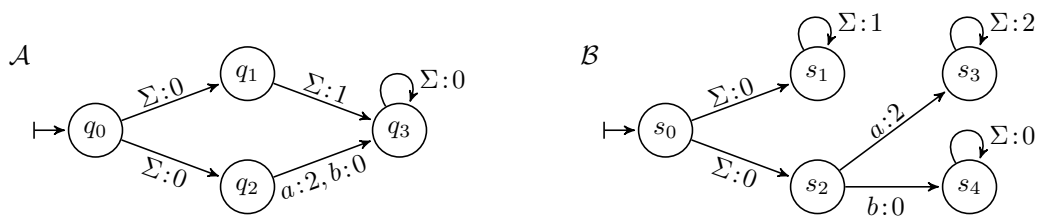
\begin{figure}[h]
	\centering
	\begin{tikzpicture}[->,>=stealth',shorten >=1pt,auto,node distance=2cm, semithick, initial text=, every initial by arrow/.style={|->},state/.style={circle, draw, minimum size=0.5cm}]

		\node (A) {$\A$};
		\node[right of = A,below of =A, initial left, state,xshift=-1cm,yshift=1cm] (q0) {$q_0$};
		\node[state] (q1) [ right of=q0,yshift=0.75cm] {$q_1$};
		\node[state] (q2) [ right of=q0,yshift=-0.75cm] {$q_2$};
		\node[state] (q3) [right of=q0,xshift=2cm] {$q_3$};
		
		\path 
		(q0) edge node[xshift=0.35cm,yshift=0.05cm,rotate=20] {$\LW{\Sigma}{0}$}   (q1)		
		(q0) edge node[below,xshift=0.0cm,yshift=0.05cm,rotate=-20] {$\LW{\Sigma}{0}$}  (q2)		
		
		(q1) edge  node[xshift=-0.45cm,yshift=0.1cm,rotate=-20] {$\LW{\Sigma}{1}$} (q3)
		(q2) edge  node[below,xshift=-0.1cm,yshift=-0.00cm,rotate=20] {$\LW{a}{2}, \LW{b}{0}$}(q3)
		
		(q3) edge	[loop above, out=120, in=70,looseness=5] node [right,xshift=.2cm,yshift=-0.05cm]{$\LW{\Sigma}{0}$} (q3);

		\node[right of = q3,above of =q3,xshift=0.5cm,yshift=-1cm]  (B) {$\B$};
		\node[right of = B,below of =B, initial left, state,xshift=-1cm,yshift=1cm] (s0) {$s_0$};
		\node[state] (s1) [ right of=s0,yshift=0.75cm] {$s_1$};
		\node[state] (s2) [ right of=s0,yshift=-0.75cm] {$s_2$};
		\node[state] (s3) [right of=s1] {$s_3$};
		\node[state] (s4) [right of=s2] {$s_4$};
		
		\path 
		(s0) edge node[xshift=0.35cm,yshift=0.05cm,rotate=20] {$\LW{\Sigma}{0}$}   (s1)		
		(s0) edge node[below,xshift=0.0cm,yshift=0.05cm,rotate=-20] {$\LW{\Sigma}{0}$}  (s2)		
		
		(s1) edge	[loop above, out=120, in=70,looseness=5] node [right,xshift=.2cm,yshift=-0.05cm]{$\LW{\Sigma}{1}$} (s1)
		(s2) edge  node[xshift=0.35cm,yshift=0.2cm,rotate=40] {$\LW{a}{2}$} (s3)
		(s2) edge  node[below,xshift=-0.1cm,yshift=-0.00cm] {$\LW{b}{0}$}(s4)
		
		(s3) edge	[loop above, out=120, in=70,looseness=5] node [right,xshift=.2cm,yshift=-0.05cm]{$\LW{\Sigma}{2}$} (s3)
		(s4) edge	[loop above, out=120, in=70,looseness=5] node [right,xshift=.2cm,yshift=-0.05cm]{$\LW{\Sigma}{0}$} (s4)
		;
		
	\end{tikzpicture}
	\caption{Nondeterministic automata that are threshold history deterministic and threshold determinizable by pruning, but not history deterministic and not determinizable by pruning. The automaton $\A$ has this property with respect, for example, to the $\Sum/\DSum/\Sup$ value functions, and $\B$ with respect, for example, to $\Avg/\LimSup/\LimInf/\LimSupAvg/\LimInfAvg$.}
	\label{fig:THDnotHD}
\end{figure}

\paragraph*{History Determinism $\neq$ Determinizability by Pruning}\label{sec:HdNeqDbp}

{\bf For nondeterministic automata}, it is clear that determinizability by pruning implies history determinism: the pruning provides a strategy for Eve in her letter game.
\begin{proposition}\label{cl:NondetDbpImpHd}
	For nondeterministic automata, determinizability by pruning \Implication history determinism.
\end{proposition}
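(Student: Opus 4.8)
The plan is to exhibit winning strategies for both Eve and Adam in their respective letter games on $\A$, which by \cref{def:HistoryDet} establishes history determinism. Let $\A'$ be the deterministic automaton obtained from $\A$ by pruning, with $\A'\equiv\A$, as guaranteed by determinizability by pruning. Since $\A'$ is deterministic, for every state $q$ and letter $\letter$ that $\A'$ retains it fixes a single transition $\trans{q}{\letter:\weight}{q'}$, and this choice depends only on $q$ and $\letter$.

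First I would handle Eve's letter game, which is where the content lies. The pruning $\A'$ induces a memoryless strategy for Eve: whenever the play reaches a configuration whose transition condition is the disjunction $\delta(q,\letter)$, Eve resolves it to select precisely the weight-state pair that $\A'$ keeps at $(q,\letter)$. Because $\A$ is nondeterministic there are no conjunctions, so these are the only choices to resolve, and following this strategy against any sequence of letters chosen by Adam produces exactly the unique run of $\A'$ on the generated word $w$. Hence $\Val(\pi)=\A'(w)=\A(w)$ for infinite words, and likewise $\Val(\pi[0..i])=\A'(w[0..i])=\A(w[0..i])$ for every prefix when $\A$ is over finite words; in both cases Eve's winning condition is met, so she wins.

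Adam's letter game is immediate for a nondeterministic automaton: with no conjunctions present, Adam has no moves, and Eve alone chooses letters and resolves all disjunctions, producing a single run $\pi$ on some word $w$. Since the value of a nondeterministic automaton on $w$ is the supremum of $\Val$ over all its runs on $w$, any individual run satisfies $\Val(\pi)\le\A(w)$, so Adam's winning condition $\Val(\pi)\le\A(w)$ holds regardless of Eve's play, and Adam wins vacuously.

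The argument is essentially routine, and the main point requiring care is the finite-word case, where the winning condition must hold for \emph{every} prefix. There I would rely on $\A'\equiv\A$ giving $\A'(u)=\A(u)$ on all finite words $u$, not merely on the full word, together with the observation that the run prefixes produced by Eve's memoryless strategy are exactly the $\A'$-runs on the corresponding word prefixes. One should also note that the pruning retains the initial state and that every state reached along an $\A'$-run has its transition defined, so Eve's strategy is well defined throughout the play.
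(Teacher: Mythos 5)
Your proof is correct and matches the paper's argument, which is exactly the one-line observation that the pruning provides Eve's strategy in her letter game (with Adam's letter game being trivially won since any run's value is bounded by the supremum defining $\A(w)$). Your additional care about prefixes in the finite-word case and the well-definedness of the memoryless strategy is a sound elaboration of the same idea.
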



The converse was shown to be false for B\"uchi and coB\"uchi automata \cite{BKKS13}, directly implying the same for $\LimSup$ and $\LimInf$ automata. Considering $\LimInfAvg$ and $\LimSupAvg$ automata, the automaton depicted in \cref{fig:LimAvg}, which is similar to the coB\"uchi automaton in \cite[Figure 3]{BKKS13}, is history deterministic but not determinizable by pruning.

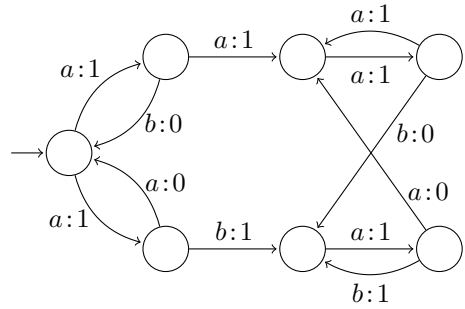
\begin{figure}[htb]
	\begin{center}
		\begin{tikzpicture}[shorten >=1pt,node distance=1.8cm,on grid,auto,initial text=,
			every state/.style={inner sep=0pt,minimum size=6mm},accepting/.style=accepting by arrow]
			\node[state,initial]	(p_0) {};
			\node[state,above right=of p_0] (p_1) {};
			\node[state,below right=of p_0] (p_2) {};
			\node[state,right=of p_1] (q_1) {};
			\node[state,right=of p_2] (q_2) {};
			\node[state,right=of q_1] (r_1){};
			\node[state,right=of q_2] (r_2) {};
			\path[->]
			(p_0) edge[bend left] node[xshift=0.15cm] {$\LW{a}{1}$} (p_1)	
			edge[bend right] node[left] {$\LW{a}{1}$} (p_2)
			(p_1) edge[bend left] node[near start,xshift=-0.2cm] {$\LW{b}{0}$} (p_0)
			edge node {$\LW{a}{1}$} (q_1)
			(p_2) edge[bend right] node[right] {$\LW{a}{0}$} (p_0)
			edge node {$\LW{b}{1}$} (q_2)
			(q_1) edge node[below] {$\LW{a}{1}$} (r_1)
			(q_2) edge node {$\LW{a}{1}$} (r_2)
			(r_1) edge node[near start,xshift=-0.15cm] {$\LW{b}{0}$} (q_2)
			edge[bend right] node[above] {$\LW{a}{1}$} (q_1)
			(r_2) edge node[right,near start]{$\LW{a}{0}$} (q_1)
			edge[bend left] node{$\LW{b}{1}$} (q_2)
			;
		\end{tikzpicture}
	\end{center}
	\vspace{-.5cm}
	\caption{\label{fig:LimAvg} (Similar to \cite[Figure 3]{BKKS13}.) A history deterministic $\LimInfAvg$ or $\LimSupAvg$ automaton that is not determinizable by pruning. (Missing transitions lead to a sink with a $0$-weighted self loop on both $a$ and $b$.)  It is history deterministic by a strategy that chooses in the initial state to go up if and only if it went down the previous time. Following this strategy in the letter game, Eve returns infinitely often to the initial state only on $(aaab)^\omega$, getting a value $\frac{1}{2}$, which is also the automaton's value on it. For every other word, the run of Eve moves to the right part of the automaton, which is deterministic, guaranteeing Eve the optimal value on the word. On the other hand, every pruning of it yields an automaton whose value on either $a^\omega$ or $(ab)^\omega$ is $\frac{1}{2}$ instead of $1$.}
\end{figure}

\begin{proposition}\label{cl:HdNeqDbp}
For nondeterministic $\LimInf$, $\LimSup$, $\LimInfAvg$, and $\LimSupAvg$ automata, history determinism \NoImplication determinizability by pruning.
\end{proposition}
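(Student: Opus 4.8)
The plan is to prove all four non-implications by exhibiting, for each value function, a single nondeterministic automaton that is history deterministic but not determinizable by pruning. I would split the argument into two parts: the $\LimSup$ and $\LimInf$ cases, which I would reduce to the known Boolean separations of \cite{BKKS13}, and the $\LimInfAvg$ and $\LimSupAvg$ cases, which I would settle directly using the automaton of \cref{fig:LimAvg}.

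For $\LimSup$ and $\LimInf$, I would first recall (as in the proof of \cref{cl:gfg-is-sometimes-thd}) that a threshold $\LimSup$ (resp.\ $\LimInf$) automaton is exactly a B\"uchi (resp.\ coB\"uchi) automaton. I would take a B\"uchi (resp.\ coB\"uchi) automaton that is history deterministic but not determinizable by pruning, which exists by \cite{BKKS13}, and reinterpret it as a $\LimSup$ (resp.\ $\LimInf$) automaton by putting weight $1$ on accepting transitions and $0$ elsewhere (resp.\ weight $0$ on the transitions leaving the safe region and $1$ elsewhere), so that every word has value $0$ or $1$ and the threshold-$1$ language is the Boolean one. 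Since the automaton is nondeterministic, history determinism reduces to Eve winning her letter game, as Adam's letter game is trivially won because $\A(w)$ is the supremum over runs (cf.\ \cref{cl:NondetDbpImpHd}); with only values $0$ and $1$ this is precisely the Boolean history-determinism condition, which transfers directly. For the failure of determinizability by pruning I would argue the contrapositive: any deterministic sub-automaton obtained by pruning would, after thresholding at $1$, be a deterministic sub-automaton of the Boolean automaton recognising the same language, contradicting that the latter is not determinizable by pruning; here I use that a pruning of the weighted automaton is a pruning of the underlying Boolean one.

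For $\LimInfAvg$ and $\LimSupAvg$ I would verify the two properties of the automaton $\A$ of \cref{fig:LimAvg}, whose only nondeterministic choice is at the initial state on letter $a$ (up or down); every other transition is forced. For history determinism I would describe Eve's letter-game strategy explicitly: resolve this single choice by going up iff the previous visit to the initial state went down. I would then show this choice is optimal on every word by a case distinction: the only word on which Eve's run returns to the initial state infinitely often is $(aaab)^\omega$, on which the produced weight sequence is $(1010)^\omega$ and hence has value $\frac{1}{2}=\A((aaab)^\omega)$; for every other word the run eventually leaves the gadget into the deterministic right component and realises the optimal value of the remaining suffix (for instance, on $(ab)^\omega$ the alternation forces a ``down'' on the second return, exiting into the weight-$1$ cycle $q_2\leftrightarrow r_2$). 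For the failure of determinizability by pruning I would observe that a deterministic pruning must keep exactly one of the two initial $a$-edges: keeping ``up'' yields value $\frac{1}{2}$ on $(ab)^\omega$, whose true value is $1$, while keeping ``down'' yields value $\frac{1}{2}$ on $a^\omega$, whose true value is $1$; in both cases the pruned automaton is not equivalent to $\A$. The same weights and the same two witness words serve both $\LimInfAvg$ and $\LimSupAvg$, since the runs witnessing these values are eventually periodic and the two mean-payoff variants agree on eventually periodic sequences.

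The main obstacle I anticipate is the optimality verification in the $\LimInfAvg$/$\LimSupAvg$ case, specifically establishing $\A((aaab)^\omega)=\frac{1}{2}$, i.e.\ that no run beats the alternating strategy on this word. This amounts to checking that on $(aaab)^\omega$ every run must, in each period, take at least one weight-$0$ transition on the recurring $b$: within the deterministic right component the state $q_2$, from which $b$ can be read with weight $1$, is reachable only via a $b$-transition, which the $aaab$-pattern never supplies in time, so no run exceeds average $\frac{1}{2}$. The remaining checks --- that leaving the gadget secures the optimal tail value on all other words, and that the two pruning choices fail exactly on $(ab)^\omega$ and $a^\omega$ --- are short, forced computations on eventually periodic runs.
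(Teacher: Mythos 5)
Your proposal matches the paper's own proof in both structure and substance: the $\LimSup$/$\LimInf$ cases are handled by reweighting the B\"uchi/coB\"uchi counterexamples of \cite{BKKS13} (the paper simply says this ``directly implies'' the quantitative case), and the $\LimInfAvg$/$\LimSupAvg$ cases use exactly the automaton of \cref{fig:LimAvg} with the alternating up/down strategy and the same witness words $(aaab)^\omega$, $a^\omega$ and $(ab)^\omega$. One small correction to the obstacle you flag: the weight-$1$ transition on $b$ in the right component leaves $r_2$ rather than $q_2$, and ``at least one weight-$0$ edge per $aaab$-period'' only bounds the value by $3/4$ --- the check actually needed (and which does hold, so nothing breaks) is that every eventual cycle of a run on $(aaab)^\omega$ carries two zeros per period of four, giving value exactly $\frac{1}{2}$.
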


{\bf For alternating automata}, it turns out that (threshold) history determinism and determinizability by pruning are incomparable, as demonstrated in \cref{fig:DbpInCmpHd}.

\begin{proposition}\label{cl:DbpInCompHd}
	For (Boolean and quantitative) alternating automata, determinizability by pruning \NoImplication history determinism, good for gameness.
\end{proposition}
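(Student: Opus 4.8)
The plan is to exhibit a single alternating automaton $\A$ that is determinizable by pruning but neither history deterministic nor good for games; the same automaton serves for the Boolean and the quantitative claim. Since \cref{cl:NondetDbpImpHd} shows that for \emph{nondeterministic} automata determinizability by pruning already implies history determinism, the construction must genuinely use universal branching: the extra room comes from the fact that history determinism requires Eve to win \emph{Eve's} letter game even though Adam resolves the conjunctions, whereas a pruning fixes those conjunctions once and for all. The idea is therefore to take a nondeterministic automaton $\N$ that is \emph{not} history deterministic but is equivalent to a deterministic automaton $\D$ (for instance the classical nondeterministic co-B\"uchi automaton for ``finitely many $a$'s'' of \cite{BKKS13}, read as a $\LimInf$ automaton with $\{0,1\}$-weights, together with a deterministic co-B\"uchi/$\LimInf$ automaton $\D$ for the same language), and to conjoin them.

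Concretely, I would let $\A$ have a fresh initial state whose transition over each letter is the \emph{conjunction} of the corresponding $\N$-transition and the corresponding $\D$-transition; thus, after the first letter, Adam (who owns conjunctions) commits once and for all to running inside $\N$ or inside $\D$, after which $\N$ carries only Eve's disjunctions and $\D$ is deterministic. Then $\A(w)=\min(\N(w),\D(w))=f(w)$ for the common function $f$ realized by $\N$ and $\D$, so $\A$ realizes $f$. Pruning the conjunction to its $\D$-conjunct (and discarding the now-unreachable $\N$-part) leaves the deterministic automaton $\D\equiv\A$, so $\A$ is determinizable by pruning. For history determinism I would consider Eve's letter game of \cref{def:HistoryDet}: Adam picks the letters and resolves the initial conjunction, so he can commit to the $\N$-branch and thereafter replay his winning strategy from \emph{Eve's letter game on $\N$} (which exists, since $\N$ is not history deterministic and, being of $\LimInf$ type, its letter games are determined by \cref{cl:gfg-is-sometimes-thd}, so the losing player's opponent wins). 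This forces Eve to build an $\N$-run of value $<f(w)$ on some word $w$ with $f(w)$ high, so Eve loses Eve's letter game and $\A$ is not history deterministic.

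Finally, to see that $\A$ is not good for games I would invoke the equivalence between good for gameness and threshold history determinism: the same Adam strategy wins Eve's $t$-threshold letter game for the relevant threshold, so $\A$ is not threshold history deterministic, and since the letter games here are determined (\cref{cl:gfg-is-sometimes-thd}) the failure of threshold history determinism yields failure of good for gameness through \cref{cl:THDimpliesTGFG,cl:TGFGpartiallyImpliesTHD}; equivalently, one can read the witnessing game off the proof of \cref{cl:TGFGpartiallyImpliesTHD}, namely the one-player game $G_s$ induced by Adam's letter-game strategy $s$, on which Eve wins the threshold game but loses it in $G_s\times\A$. The main obstacle, and the reason such examples cannot exist without alternation, is to keep $\A$ determinizable by pruning while genuinely defeating Eve's on-the-fly play: the naive attempt of conjoining two \emph{deterministic} branches collapses, because there a pruning can be equivalent only if one branch dominates the other pointwise, and that domination makes Adam's optimal conjunction choice positional and hence makes $\A$ good for games. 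Planting a genuinely non-history-deterministic \emph{nondeterministic} gadget in one conjunct is exactly what lets the fixed pruning avoid the gadget while Adam, controlling the conjunction in the letter game, can steer Eve into it.
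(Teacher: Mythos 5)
Your proof is correct, but it reaches the counterexample by a genuinely different route than the paper. The paper's witness (\cref{fig:DbpInCmpHd}) is a small, self-contained alternating automaton on \emph{finite} words that accepts nothing: the pruning removes a \emph{disjunct} (keeping the empty branch), and history determinism fails because \emph{Adam's} letter game is lost --- Eve, who picks the letters there, steers into the conjunction, Adam must resolve it before seeing the next letter, and Eve then picks the letter that makes the run accept. In other words, the automaton over-approximates: it admits runs whose value exceeds the word's value. Your construction is dual: you conjoin a non-history-deterministic nondeterministic automaton $\N$ with an equivalent deterministic $\D$, the pruning removes a \emph{conjunct}, and it is \emph{Eve's} letter game that fails, because Adam commits the conjunction to $\N$ and then defeats Eve inside it --- the automaton under-approximates along Eve's on-the-fly runs. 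Both failures suffice for the proposition (history determinism requires winning both letter games), and your derivation of non-good-for-gameness via the one-player game $G_s$ from the proof of \cref{cl:TGFGpartiallyImpliesTHD} is sound; note that you do not actually need determinacy of $\A$'s letter games for that step, since you exhibit Adam's winning strategy explicitly, and for the two-valued $\LimInf$ gadget the threshold and plain letter games coincide, which closes the gap you might otherwise have between ``not HD'' and ``not threshold HD''. What the paper's route buys is self-containedness (no imported gadget, no appeal to determinacy of $\N$'s letter game), a finite-word example, and uniformity over every non-trivial value function; what yours buys is modularity --- any determinizable-but-not-history-deterministic nondeterministic automaton can be planted in a conjunct --- and your closing observation correctly identifies why a conjunction of two purely deterministic branches cannot work, which is precisely the obstruction the paper's example avoids by hiding its conjunction under a prunable disjunction.
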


\begin{proof}
	The claim holds for Boolean automata as well as quantitative automata with every non-trivial value function.
	Consider the alternating finite automaton on finite words (which can also be viewed, for example as a  $\Sup$ automaton) in \cref{fig:DbpInCmpHd}. 
	It does not accept any word, and can be determinized by pruning the right nondeterministic transition. However, it is not history deterministic: Eve wins Adam's letter game, by choosing the right nondeterministic transition. 
\end{proof}

\begin{figure}[h]
	\centering
	\begin{tikzpicture}[->,>=stealth',shorten >=1pt,auto,node distance=2cm, semithick, initial text=, every initial by arrow/.style={|->},state/.style={circle, draw, minimum size=0.2cm}]

		\node[initial left, state] (q0) {$q_0$};
		\node[state] (q0a) [ right of=q0,xshift=-0.75cm] {\!\!\small${\lor}$\!\!\!};
		\node[state] (q0b) [ right of=q0a,xshift=-0.95cm, yshift=-0.5cm] {\!\!\small${\land}$\!\!\!};
		\node[state] (q1) [below of=q0,yshift=1.0cm,xshift=0cm] {$q_1$};
		\node[state] (q2) [left of=q1,yshift=0cm,xshift=0cm] {$q_2$};
		\node[state] (q3) [ right of=q0a,yshift=0cm,xshift=0.5cm] {$q_3$};
		\node[state] (q4) [right of=q0a,yshift=-1.0cm,xshift=0.5cm] {$q_4$};
		\node[state] (q5) [right of=q0a,xshift=3cm] {$q_5$};
		\node[state, accepting] (q6) [right of=q4,xshift=0.5cm] {$q_6$};
		
		\path 
		
		(q0) edge  node {$\Sigma$} (q0a)		
		(q0a) edge (q0b)		

		(q0a) edge   (q1)		
		(q1) edge  node [above,rotate=0, xshift=0cm, yshift=0.0cm]{$\Sigma$} (q2)
		
		(q0b) edge   (q3)		
		(q3) edge  node[above,xshift=-0.25cm, yshift=-0.05cm] {$a$} (q5)
		(q3) edge  node[xshift=-0.35cm, yshift=-0.05cm,rotate=20] {$b$} (q6)
		
		(q0b) edge   (q4)		
		(q4) edge  node[xshift=-0.4cm, yshift=-0.25cm,rotate=20] {$b$} (q5)
		(q4) edge  node[above,xshift=-0.55cm, yshift=-0.35cm] {$a$} (q6)

		(q2) edge	[loop left] node {$\Sigma$} (q2)
		(q5) edge	[loop right] node {$\Sigma$} (q5)
		(q6) edge	[loop right] node {$\Sigma$} (q6)
		
		;
	\end{tikzpicture}
	\caption{An alternating finite automaton on finite words that is determinizable by pruning, but not history deterministic nor good for games.}\label{fig:DbpInCmpHd}
\end{figure}
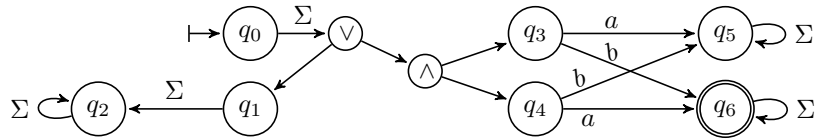

\subsection{When (History Determinism $=$ Determinizability by Pruning)}\label{sec:SometimesHdEqDbp}

In general for nondeterministic automata, determinizability by pruning is strictly contained in history determinism; here we study when the two notions coincide.
In the Boolean setting, they are equivalent for nondeterministic finite automata on finite words (NFAs) \cite{KSV06} as well as for nondeterministic weak automata on infinite words \cite{Mor03}.
Here we analyse general properties of value functions that guarantee this equivalence, and then consider specific value functions on finite and infinite words.
The general properties that we analyze relate to how ``sensitive'' the value function is to the prefix, current position, and suffix of the weight sequence.

We begin by defining \textit{cautious\footnote{Similar transitions are sometimes called ``residual'' in the literature.} strategies} for Eve in 
the letter game, that we then use to define value functions that are ``present focused''.
 Intuitively, a strategy is cautious if it avoids mistakes, that is, it only builds run prefixes that can still achieve the maximal value of any continuation of the word so far. 

\begin{definition}[Cautious strategies]
Consider Eve's letter game on a $\Val$ automaton $\A$.
	A move (transition) $t=q\xrightarrow{\sigma:\weight}q'$ of Eve, played after some run $\rho$ ending in a state $q$, is \emph{non-cautious}
	if for some word $w$, there is a run $\pi'$ from $q$ over $\sigma w$ such that $\Val(\rho\pi')$ is strictly greater than the value of $\Val(\rho\pi)$ for any $\pi$ starting with $t$.
	
	A strategy is \emph{cautious} if it makes no non-cautious moves.
\end{definition}

We call a value function \emph{present focused} if, morally, it depends on the prefixes of the value sequence, formalized by winning the letter game via cautious strategies.

\begin{definition}[Present-focused value functions]
	A value function $\Val$, on finite or infinite sequences, is \emph{present focused} if for all automata $\A$ with value function $\Val$, every cautious strategy in the letter game on $\A$ is also a winning strategy in that game.
\end{definition}

\noindent Value functions on finite sequences are present focused, as they can only depend on prefixes.
\begin{lemma}\label{cl:finite-is-present-oriented}
	Every value function  $\Val$ on finite sequences is present focused.
\end{lemma}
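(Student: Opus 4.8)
The plan is to show that a cautious strategy keeps Eve's run prefix \emph{optimal} at every step, exploiting the fact that over finite words the letter game's winning condition is nothing but a conjunction of per-prefix inequalities $\Val(\pi[0..i]) \geq \A(w[0..i])$. Concretely, for a run prefix $\rho$ of $\A$ ending in a state $q$ and a finite continuation word $u$, write
\[
\mathsf{opt}(\rho,u) = \sup\{\, \Val(\rho\pi'') \St \pi'' \text{ a run of } \A^q \text{ over } u \,\}.
\]
The key claim is the invariant: along any play consistent with a cautious Eve strategy, if $x = w[0..i-1]$ is the word read so far and $\rho = \pi[0..i-1]$ the run built so far (ending in $q$), then $\mathsf{opt}(\rho,u) = \A(x\con u)$ for every finite continuation $u$. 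Instantiating $u=\epsilon$ collapses the supremum to the single empty run and yields $\Val(\rho)=\A(x)$, which is exactly the per-prefix inequality Eve needs; since this holds at every $i$, the cautious strategy wins the letter game.

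It remains to prove the invariant by induction on $i$. The base case is immediate: before any letter is read $\rho$ is empty, $q=\iota$, and $\mathsf{opt}(\text{empty},u)=\A(u)$ by definition of $\A$. For the inductive step, suppose the invariant holds for $\rho$ ending in $q$ after $x$, Adam then plays a letter $\sigma$, and Eve's cautious strategy plays a transition $t=\trans{q}{\sigma:\weight}{q'}$, extending the run to $\rho'=\rho t$. Runs of $\A^{q'}$ over $v$, once prefixed by $\rho'$, are exactly the runs from $q$ over $\sigma v$ that start with $t$, so $\mathsf{opt}(\rho',v)$ is the supremum of $\Val(\rho\pi)$ over such runs; this is $\leq \mathsf{opt}(\rho,\sigma v)$ because it ranges over a subset. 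Conversely, cautiousness of $t$ says precisely that no run $\pi'$ from $q$ over $\sigma v$ beats this supremum, i.e.\ $\mathsf{opt}(\rho,\sigma v)\leq \mathsf{opt}(\rho',v)$. Hence $\mathsf{opt}(\rho',v)=\mathsf{opt}(\rho,\sigma v)$, which by the induction hypothesis (applied to the continuation $\sigma v$) equals $\A(x\sigma v)$. As $v$ was arbitrary, the invariant holds for $\rho'$ after $x\sigma$, completing the induction. Note that over a fixed finite word there are only finitely many runs, so all these suprema are attained as maxima and the strict-inequality phrasing in the definition of non-cautiousness causes no trouble; this also shows a cautious move always exists (the value-maximising transition is cautious), so cautious strategies are well defined, although the lemma only needs the stated implication.

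The step I expect to be the main obstacle is extending this from nondeterministic to alternating $\A$, which the definition of present focused also covers. For alternating automata $\A(x\con u)$ is a $\min$-$\max$ value rather than a supremum over branches, so $\mathsf{opt}$ as written (a plain supremum) no longer equals $\A(x\con u)$, and Adam now also resolves conjunctions inside the letter game. The fix is to reread cautiousness and $\mathsf{opt}$ in terms of the value Eve can \emph{guarantee} from the current configuration against all of Adam's future conjunction and letter choices, and to maintain the weaker invariant $\mathsf{opt}(\rho,u)\geq \A(x\con u)$. Eve's cautious disjunction choices preserve this guarantee by definition, while at a conjunction Adam can only move to a subtree whose guarantee is at least the $\min$ over conjuncts already folded into $\A(x\con u)$, so his choices keep the guarantee $\geq \A(x\con u)$ as well; instantiating $u=\epsilon$ again gives $\Val(\rho)\geq \A(x)$ at every prefix. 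The remaining care is purely bookkeeping between the universally-quantified continuation $u$ in the definitions and the concrete letters Adam selects in the game, which the ``for every $u$'' form of the invariant absorbs.
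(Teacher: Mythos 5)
Your argument is correct and is essentially the paper's own: both rest on the single observation that a cautious move preserves the supremum of achievable values over every continuation, so over finite words the constructed run must match the word's value at every prefix --- you package this as a forward induction on the invariant $\mathsf{opt}(\rho,u)=\A(x\con u)$, while the paper runs the same idea backwards as a minimal-counterexample argument (take the longest run prefix from which the optimum is still reachable; the next move is non-cautious). Your concern about alternating automata is moot here: the definition of a cautious move as a single Eve-chosen transition $\trans{q}{\sigma:x}{q'}$, and the whole of the surrounding section, are set in the nondeterministic case, which is all the paper's proof addresses.
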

\begin{proof}
	Assume Eve plays a cautious strategy $s$ in some letter game on an automaton $\A$ on finite words. Towards a contradiction, assume that there is a  finite play $\pi$, in which Adam plays some word $w$ and Eve plays a run $\rho$ over $w$ such that $\Val(\rho)<\A(w)$. Then,  let $\rho'$ be the longest prefix of $\rho$ such that the highest value of a run over $w$ starting with $\rho'$ is $\A(w)$. Since $\rho$ is not a run with value $\A(w)$, $\rho'$ is a strict prefix of $\rho$. However, since $\rho'$ is the longest prefix that could be continued into a run with value $\A(w)$, Eve's next move after $\rho'$ must be non-cautious, contradicting that $s$ never plays non-cautious moves.
\end{proof}

\begin{remark}
	Value functions on infinite sequences are not necessarily present focused.
	For example, consider the automaton depicted in \cref{fig:HdInfiniteNotFinite}, but viewed as a 
	$\Sup$ automaton on infinite words rather than a $\DSum$ automaton. Observe that Eve can forever stay in $q_0$, always having the potential to continue to an optimal run with value $2$, but never fulfilling this potential.
	\end{remark}

We now define ``suffix monotonicity'' of value functions, which, with present-focus, will guarantee the equivalence of history determinism and determinizability by pruning.

\begin{definition}
	A value function $\Val$ is \emph{suffix monotonic} if for every finite set $S\subset \Rat$, sequence $\alpha\in S^*$ and sequences $\beta,\beta'\in S^\infty$, we have $\Val(\beta)\geq\Val(\beta')$ iff $\Val(\alpha\beta)\geq\Val(\alpha\beta')$.
\end{definition} 
Observe that the above definition does not consider arbitrary sequences of rational numbers, but rather sequences of finitely many different rational numbers, which is the case in sequences of weights that are generated by runs of quantitative automata.

Value functions that are \emph{suffix dependent} (namely $\Val$ functions such that for every finite set $S\subset \Rat$, sequences $\alpha,\alpha'\in S^*$ and sequence $\beta\in S^\infty \setminus \{\epsilon\}$, we have $\Val(\alpha\beta)= \Val(\alpha'\beta)$) are obviously suffix monotonic. Examples for such value functions are the  acceptance condition of NFAs (i.e, a ``last'' value function, that depends only on the last weight of $0$ for rejection and $1$ for acceptance), all $\omega$-regular conditions (which depend on the states/transitions that are visited infinitely often), $\LimInf$, $\LimSup$, $\LimInfAvg$, and $\LimSupAvg$. 
Examples for value functions that are suffix monotonic but not suffix dependent are $\Sum, \Avg$ and $\DSum$, and examples for value functions that are not suffix monotonic are $\Inf$ and $\Sup$.

We next show that suffix monotonicity together with present-focus guarantee the equivalence of history determinism and determinizability by pruning.
The idea is that under these conditions, every cautious strategy in the letter game can be arbitrarily pruned into a positional strategy (with respect to the automaton states).

\begin{theorem}\label{cl:PresentOrientedAndSuffixMonotonic}
	For nondeterministic $\Val$ automata, where $\Val$ is a present-focused and suffix-monotonic value function, we have that history determinism \DualImplication determinizability by pruning.
\end{theorem}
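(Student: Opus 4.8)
The plan is to prove the two implications separately. The direction ``determinizability by pruning $\Longrightarrow$ history determinism'' is already supplied by \cref{cl:NondetDbpImpHd}, so the entire content is the converse: assuming the nondeterministic automaton $\A$ is history deterministic and that $\Val$ is present focused and suffix monotonic, I would construct a deterministic pruning $\A'$ of $\A$ with $\A'\equiv\A$. The deterministic automaton is obtained by fixing, for every state--letter pair, one \emph{cautious} transition. The two hypotheses on $\Val$ serve different purposes: suffix monotonicity makes ``being cautious'' a property of a single transition (so that the induced strategy is positional and the pruning is well defined), while present focus guarantees that the resulting deterministic automaton is still equivalent to $\A$.

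First I would show that, under suffix monotonicity, whether a transition $t=\trans{q}{\sigma:\weight}{q'}$ is cautious does not depend on the history $\rho$ leading to $q$. By definition $t$ is non-cautious after $\rho$ iff there is a word $w$ and a run $\pi'$ from $q$ over $\sigma w$ with $\Val(\rho\pi')>\Val(\rho\pi)$ for every run $\pi$ over $\sigma w$ starting with $t$; taking $\alpha=\gamma(\rho)$ in the definition of suffix monotonicity turns each comparison $\Val(\rho\pi')>\Val(\rho\pi)$ into the history-free comparison $\Val(\pi')>\Val(\pi)$. Hence ``cautious'' is an attribute of the transition alone, and a run built from cautious transitions can be continued cautiously in a way that depends only on the current state.

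Next I would use history determinism to show that cautious moves are available wherever Eve needs them. Fix a winning strategy for Eve in her letter game; I claim it never plays a non-cautious transition on a consistent play. Otherwise, consider the play in which Adam replays the word $u$ read along the offending history $\rho$, to which Eve's strategy responds by $\rho$ itself (as $\rho$ is consistent with it), and then plays $\sigma$ followed by the witnessing suffix $w$. Eve is committed to the non-cautious transition $t$, so every run she can still produce starts with $t$ and therefore has value strictly below $\Val(\rho\pi')\le\A(u\sigma w)$ --- note that no single run through $t$ attains the witness value, precisely because the strict inequality holds for \emph{all} such runs --- contradicting that she wins. Consequently a winning strategy only visits state--letter pairs at which a cautious transition exists, and by the positionality established above a single cautious transition may be fixed at each such pair; keeping only these transitions yields a deterministic total pruning $\A'$ whose unique run on any word is a cautious run.

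Finally I would close the loop using present focus. The positional cautious selection is a cautious strategy in Eve's letter game, so by present focus it is winning: on every word $w$ the run $\rho_w$ it produces satisfies $\Val(\rho_w)\ge\A(w)$. Since $\A'$ is a pruning of $\A$, its value $\A'(w)=\Val(\rho_w)$ is at most the supremum $\A(w)$ taken over all runs of $\A$, so $\A'(w)=\A(w)$ for every $w$, giving $\A'\equiv\A$. I expect the genuine difficulty to lie in the third step: making the availability of cautious moves and the positional selection mutually consistent, so that the pruned automaton stays total along exactly the runs it generates, and, for infinite words, ruling out the degenerate case in which a strategy remains ``potentially optimal'' at every finite prefix yet never realises the value in the limit. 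The first difficulty is absorbed by the positionality coming from suffix monotonicity, and the second is exactly what present focus is designed to handle, since it promotes prefix-wise cautiousness to an actual win.
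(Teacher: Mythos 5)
Your proposal is correct and follows essentially the same route as the paper's proof: use suffix monotonicity to transfer cautiousness across histories so that a positional selection of cautious transitions is itself cautious, then invoke present focus to conclude that this positional strategy wins the letter game, so the induced pruning is equivalent to $\A$. The only difference is that you explicitly justify why a winning strategy must be cautious (and hence why a cautious strategy exists), a step the paper leaves implicit when it begins with ``let $s$ be a cautious strategy''.
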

\begin{proof}
	We show that Eve wins her letter game on $\A$ with a positional strategy, which implies that $\A$ is determinizable by pruning.
	
	Let $s$ be a cautious strategy for Eve in the letter game on $\A$. Let $\hat{s}$ be an arbitrary positional strategy that only uses transitions also used by $s$. We argue that $\hat{s}$ is also cautious.
	Indeed, if $\hat{s}$ chooses $\tau=q\xrightarrow{\sigma:\weight}q'$ after a play $(\hat{w},\hat{\rho})$ of the letter game, there is some play $(w,\rho)$ from which $s$ plays $\tau$.
	Since $s$ is cautious, for every word $v$ and every run $\pi'$ from $q$ over $\sigma v$, there is a run $\pi$ from $q$ starting with $\tau$ such that  $\Val(\rho\pi)\geq\Val(\rho\pi')$.
 Thus, by suffix monotonicity, we have $\Val(\pi)\geq \Val(\pi')$, and then again by the other direction of suffix monotonicity, we get that $\Val(\hat{\rho}\pi)\geq \Val(\hat{\rho}\pi')$, implying that $\hat{s}$ choosing $\tau$ is a cautious move.
	
	Then $\A$ is determinisable by pruning: the subautomaton $\A_{\hat{s}}$ that only has transitions used by $\hat{s}$ is equivalent to $\A$. Indeed, for every word $w$,
	$\A_{\hat{s}}(w)$ is $\Val(\rho_w)$, where $\rho_w$ is the unique run of $\A_{\hat{s}}$ over $w$. The run $\rho_w$ is also the run built by $\hat{s}$ in the letter game over $w$. Since $\hat{s}$ is cautious and $\Val$ is present focused, we have that $\hat{s}$ is a history-deterministic strategy, which guarantees that $\Val(\rho_w)=\A(w)$, giving us the equivalence of $\A$ and $\A_{\hat{s}}$.
\end{proof}

\begin{remark}
	Both present-focus and suffix-monotonicity are necessary in \cref{cl:PresentOrientedAndSuffixMonotonic}. For example $\LimInf$ is suffix monotonic, but $\LimInf$ automata are not determinizable by pruning. On the other hand, \cref{fig:HdPresentOrientedlNotDBP} demonstrates a present-focused value function whose history deterministic automata on finite words are not determinizable by pruning.
\end{remark}

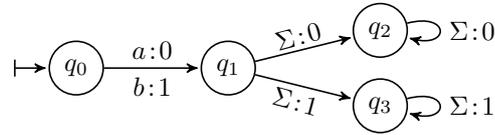
\begin{figure}[h]
	\centering
	\begin{tikzpicture}[->,>=stealth',shorten >=1pt,auto,node distance=2cm, semithick, initial text=, every initial by arrow/.style={|->},state/.style={circle, draw, minimum size=0.5cm}]

		\node[initial left, state] (q0) {$q_0$};
		\node[state] (q1) [right of=q0] {$q_1$};
		\node[state] (q2) [ right of=q1,yshift=0.5cm] {$q_2$};
		\node[state] (q3) [ right of=q1,yshift=-0.5cm] {$q_3$};
		
		\path 
		(q0) edge  node {$\LW{a}{0}$}  node[below] {$\LW{b}{1}$} (q1)		
		(q1) edge  node [xshift=.4cm, yshift=0.05cm,rotate=15]{$\LW{\Sigma}{0}$} (q2)
		(q1) edge  node [left,xshift=.3cm, yshift=-.3cm,rotate=-15]{$\LW{\Sigma}{1}$} (q3)
		
		(q2) edge	[loop right] node [right]{$\LW{\Sigma}{0}$} (q2)
		(q3) edge	[loop right] node [right]{$\LW{\Sigma}{1}$} (q3)
		
		;
	\end{tikzpicture}
	\caption{A nondeterministic $\Val$ automaton $\A$ on finite words with the value function $\Val(\rho)=1$ if $\rho$ has both even and odd values, and $0$ otherwise. Notice that $\Val$ is present focused and $\A$ is history deterministic but not determinizable by pruning.}\label{fig:HdPresentOrientedlNotDBP}
\end{figure}

We now apply these results to specific value functions.

\begin{theorem}\label{cl:SumHdDBP}
\footnote{A slightly weaker result is given in \cite[Theorem 5.1]{AKL10}: a $\Sum$ automaton is history deterministic \emph{with a finite-memory strategy for resolving the nondeterminism} if and only if it is determinizable by pruning.}
For nondeterministic $\Sum$ and $\Avg$ automata (on finite words), history determinism \DualImplication determinizability by pruning.
\end{theorem}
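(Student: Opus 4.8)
The plan is to obtain this as a direct instance of the general equivalence \cref{cl:PresentOrientedAndSuffixMonotonic}, which says that for nondeterministic $\Val$ automata with a value function that is both \emph{present-focused} and \emph{suffix-monotonic}, history determinism coincides with determinizability by pruning. So the whole task reduces to checking that $\Sum$ and $\Avg$, interpreted on finite words, satisfy these two hypotheses; once that is done the theorem follows by quoting \cref{cl:PresentOrientedAndSuffixMonotonic} verbatim.

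For present-focus there is nothing to do beyond invoking \cref{cl:finite-is-present-oriented}: $\Sum$ and $\Avg$ are value functions on finite sequences, and that lemma already establishes that \emph{every} value function on finite sequences is present-focused. For suffix-monotonicity I would argue directly from the definitions. For $\Sum$ it is immediate, since $\Sum(\alpha\beta)=\Sum(\alpha)+\Sum(\beta)$, so prepending a fixed prefix $\alpha$ shifts every value by the same constant $\Sum(\alpha)$ and therefore preserves the ordering: $\Sum(\beta)\geq\Sum(\beta')$ iff $\Sum(\alpha\beta)\geq\Sum(\alpha\beta')$, regardless of the lengths of $\beta,\beta'$. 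For $\Avg$ I would write $\Avg(\alpha\beta)=\frac{\Sum(\alpha)+\Sum(\beta)}{|\alpha|+|\beta|}$ and observe that the comparison reduces to the $\Sum$ case on a common denominator.

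The one genuinely delicate point — and the step I expect to be the main obstacle — is the suffix-monotonicity of $\Avg$. Unlike $\Sum$, averaging is sensitive to length: if $\beta$ and $\beta'$ differ in length, then $\Avg(\alpha\beta)$ and $\Avg(\alpha\beta')$ have different denominators $|\alpha|+|\beta|$ and $|\alpha|+|\beta'|$, and prepending $\alpha$ can reorder two suffixes of equal average but unequal length. The resolution is that in the proof of \cref{cl:PresentOrientedAndSuffixMonotonic} the only sequences ever compared are runs $\pi,\pi'$ over a \emph{common} suffix word (and run-prefixes $\rho,\hat\rho$ over a common prefix word), so the compared continuations always have the same length. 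In that equal-length regime the denominators of $\Avg(\alpha\beta)$ and $\Avg(\alpha\beta')$ coincide, so $\Avg(\alpha\beta)\geq\Avg(\alpha\beta')$ iff $\Sum(\alpha)+\Sum(\beta)\geq\Sum(\alpha)+\Sum(\beta')$ iff $\Sum(\beta)\geq\Sum(\beta')$ iff $\Avg(\beta)\geq\Avg(\beta')$, giving both directions of the suffix-monotonicity condition.

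With present-focus from \cref{cl:finite-is-present-oriented} and suffix-monotonicity established for both value functions, I would close the argument by applying \cref{cl:PresentOrientedAndSuffixMonotonic}, which yields history determinism $\Longleftrightarrow$ determinizability by pruning for nondeterministic $\Sum$ and $\Avg$ automata on finite words.
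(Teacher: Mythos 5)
Your proof is correct and follows exactly the paper's route: the paper's own proof of \cref{cl:SumHdDBP} is precisely the one-line appeal to \cref{cl:finite-is-present-oriented} and \cref{cl:PresentOrientedAndSuffixMonotonic} together with the (unverified) suffix monotonicity of $\Sum$ and $\Avg$. Your additional observation that $\Avg$ satisfies the suffix-monotonicity condition only for equal-length suffixes $\beta,\beta'$ --- and that this suffices because \cref{cl:PresentOrientedAndSuffixMonotonic} only ever compares runs over a common suffix word, so the denominators agree --- is a correct and worthwhile refinement of a point the paper glosses over.
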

\begin{proof}
From \cref{cl:finite-is-present-oriented,cl:PresentOrientedAndSuffixMonotonic} and the suffix monotonicity of these value functions.
\end{proof}

We continue with showing that $\DSum$ is present focused due to the function's continuity.

\begin{lemma}\label{cl:DSum-present-oriented}
	$\DSum$ on infinite sequences is a present-focused value function.
\end{lemma}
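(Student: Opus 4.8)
The plan is to prove that every cautious strategy in the letter game on a $\DSum$ automaton $\A$ is winning, which is precisely the statement that $\DSum$ is present focused. So fix a $\DSum$ automaton $\A$ with discount factor $\lambda\in\Rat\cap(0,1)$, let $s$ be a cautious strategy for Eve, and consider a play in which Adam produces an infinite word $w$ and Eve, following $s$, produces a run $\rho$ over $w$. I must show $\Val(\rho)\geq\A(w)$, i.e. $\DSum(\gamma(\rho))\geq\A(w)$.

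The key idea is to quantify the ``no-mistake'' property of cautious moves using the contraction inherent in discounting. First I would record the standard tail estimate: because all weights come from a finite set $S\subset\Rat$ with $|\gamma(t)|\leq M$ for some bound $M$, the contribution of the weights from position $n$ onward is bounded by $\sum_{i\geq n}\lambda^i M=\lambda^n\frac{M}{1-\lambda}$, which tends to $0$ as $n\to\infty$. Consequently, for any two runs $\pi,\pi'$ that agree on their first $n$ transitions, $|\DSum(\pi)-\DSum(\pi')|\leq 2\lambda^n\frac{M}{1-\lambda}$; this is exactly the continuity of $\DSum$ with respect to the Cantor topology invoked earlier in the excerpt. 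Then I would unpack cautiousness at each step: writing $\rho[..n]$ for the length-$n$ prefix of $\rho$ produced by $s$, the fact that the move at step $n$ is cautious means that the best value achievable by any run extending $\rho[..n]$ equals the best value achievable by any run extending $\rho[..n{-}1]$. Inductively, the supremum over runs extending the prefix $\rho[..n]$ stays equal to $\A(w)$ for every $n$ (here I use that $s$ starts from the initial state, where this supremum is $\A(w)$ by definition).

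From this I would conclude as follows. For each $n$, since the supremum of $\DSum$ over runs continuing $\rho[..n]$ is $\A(w)$, there is a run $\pi_n$ extending $\rho[..n]$ with $\DSum(\pi_n)\geq\A(w)-\varepsilon$ for any chosen $\varepsilon>0$. But $\pi_n$ agrees with $\rho$ on the first $n$ transitions, so by the tail estimate $\DSum(\rho)\geq\DSum(\pi_n)-2\lambda^n\frac{M}{1-\lambda}\geq\A(w)-\varepsilon-2\lambda^n\frac{M}{1-\lambda}$. Taking $n\to\infty$ kills the $\lambda^n$ term, and then letting $\varepsilon\to 0$ yields $\DSum(\rho)\geq\A(w)$, as required.

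The main obstacle is making the inductive ``cautiousness preserves the optimal value'' step fully rigorous, since the definition of a non-cautious move is phrased in terms of a strict gain over \emph{all} continuations rather than directly about the supremum being preserved. I would handle this by showing the contrapositive at each step: if the optimal continuation value strictly dropped when passing from $\rho[..n{-}1]$ to $\rho[..n]$, then there would be a word $v$ and a continuation $\pi'$ witnessing a strictly larger value than anything $\rho[..n]$ can achieve, which is exactly a non-cautious move — contradicting that $s$ is cautious. Care is needed because the ``strict'' gap in the definition of non-cautious must be matched against a possibly-infimal drop in the supremum; the contraction factor $\lambda$ is what lets any genuine drop be realized by a concrete strictly-better continuation, so the finiteness of $S$ together with the strict-less-than-$1$ discount is essential and does the real work that fails for limit-average.
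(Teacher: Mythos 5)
Your proof is correct and follows essentially the same route as the paper's: cautiousness guarantees that every prefix of the run built by the strategy can still be extended to a run of (near-)optimal value over the given word, and the discounted tail bound $2m\lambda^{n}/(1-\lambda)\to 0$ then forces the value of the constructed run to equal $\A(w)$ (the paper phrases this by comparing prefix values against a fixed optimal run $r'$, you by comparing against $\varepsilon$-optimal extensions, which even spares you the assumption that the supremum is attained). The one quibble is your closing remark: the step ``a strict drop in the continuation supremum yields a concrete strictly better continuation, hence a non-cautious move'' holds for any value function and needs neither the contraction nor the finiteness of $S$ --- those are what make the final limiting step work, which is indeed exactly where the argument fails for limit-average.
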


\begin{theorem}[{\cite[Section 5]{HPR16}}]
	For nondeterministic $\DSum$ automata on finite and infinite words, history determinism \DualImplication determinizability by pruning.
\end{theorem}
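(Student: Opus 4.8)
The plan is to reduce the statement to the general equivalence of \cref{cl:PresentOrientedAndSuffixMonotonic} by checking that $\DSum$ satisfies its two hypotheses—present-focus and suffix monotonicity—on both finite and infinite sequences. The backward direction, determinizability by pruning \Implication history determinism, is immediate from \cref{cl:NondetDbpImpHd} for nondeterministic automata and requires no further work, so all the content lies in the forward direction, which is precisely what \cref{cl:PresentOrientedAndSuffixMonotonic} delivers once its preconditions are met.

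First I would establish suffix monotonicity of $\DSum$. Let $\lambda\in\Rat\cap(0,1)$ be the discount factor, fix a finite set $S\subset\Rat$, a finite sequence $\alpha\in S^*$ of length $k$, and sequences $\beta,\beta'\in S^\infty$. The defining identity gives $\DSum(\alpha\beta)=\DSum(\alpha)+\lambda^k\DSum(\beta)$, and likewise for $\beta'$, so the inequality $\DSum(\alpha\beta)\geq\DSum(\alpha\beta')$ is equivalent to $\lambda^k\DSum(\beta)\geq\lambda^k\DSum(\beta')$, which, since $\lambda^k>0$, is equivalent to $\DSum(\beta)\geq\DSum(\beta')$. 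This equivalence holds uniformly whether $\beta,\beta'$ are finite or infinite, so $\DSum$ is suffix monotonic both as a value function on finite sequences and as one on infinite sequences.

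Second, I would dispatch present-focus. On finite words it is automatic: every value function on finite sequences is present focused by \cref{cl:finite-is-present-oriented}. On infinite words it is exactly \cref{cl:DSum-present-oriented}, which relies on the continuity of discounted summation with respect to the Cantor topology. With both hypotheses verified, I would apply \cref{cl:PresentOrientedAndSuffixMonotonic} separately in the finite-word and infinite-word settings to conclude history determinism \DualImplication determinizability by pruning for nondeterministic $\DSum$ automata in each case, which is the claim.

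At this level there is essentially no remaining obstacle: the genuinely nontrivial ingredients—the positional-pruning argument behind \cref{cl:PresentOrientedAndSuffixMonotonic} and the continuity-based present-focus of \cref{cl:DSum-present-oriented}—are already in hand, and the only fresh step, suffix monotonicity, is the one-line affine decomposition above, whose validity hinges solely on $\lambda^k>0$. The sole point demanding care is ensuring that the present-focus hypothesis is sourced correctly by word length (from \cref{cl:finite-is-present-oriented} for finite words and from \cref{cl:DSum-present-oriented} for infinite words), since \cref{cl:DSum-present-oriented} is stated for infinite sequences only.
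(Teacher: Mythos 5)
Your proposal is correct and follows essentially the same route as the paper, which likewise derives the result directly from \cref{cl:finite-is-present-oriented}, \cref{cl:DSum-present-oriented}, \cref{cl:PresentOrientedAndSuffixMonotonic} and the suffix monotonicity of $\DSum$. The only difference is that you spell out the suffix-monotonicity verification via the decomposition $\DSum(\alpha\beta)=\DSum(\alpha)+\lambda^k\DSum(\beta)$, which the paper merely asserts in its discussion of value functions.
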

\begin{proof}
	The claim, which was also proved in \cite[Section 5]{HPR16}, is a direct consequence of \cref{cl:finite-is-present-oriented,cl:DSum-present-oriented,cl:PresentOrientedAndSuffixMonotonic} and the suffix monotonicity of the $\DSum$ value functions.
\end{proof}
	
The $\Inf$ and $\Sup$ value function are not suffix monotonic, and indeed the proof of \cref{cl:PresentOrientedAndSuffixMonotonic} does not hold for them -- not every cautious transition of a history deterministic $\Sup$ automaton on finite words can be used for pruning it into a deterministic automaton. Yet, also for $\Inf$ and $\Sup$ automata on finite words we have that history determinism is equivalent to determinizability by pruning, using other characteristics of these value functions -- we can prune the automaton, by choosing the transitions that are used by the strategy of the letter game after reading words with minimal values for $\Sup$ and maximal value for $\Inf$.

\begin{theorem}\label{cl:InfSupHdImpDbp}
	For nondeterministic $\Inf$ and $\Sup$ automata on finite words, history determinism \DualImplication ~determinizability by pruning.
\end{theorem}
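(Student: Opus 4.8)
The easy direction, determinizability by pruning $\Longrightarrow$ history determinism, is \cref{cl:NondetDbpImpHd}, so the plan is to establish the converse, and I treat $\Sup$ in detail since $\Inf$ is order-dual. Fix a history deterministic $\Sup$ automaton $\A$; being nondeterministic, Adam wins his letter game for free, so history determinism amounts to Eve winning her letter game, and since $\Sup$ on finite words is present focused (\cref{cl:finite-is-present-oriented}) this is the same as the existence of a cautious strategy $s$. The goal is to prune $s$ into a \emph{positional} cautious strategy $\hat{s}$: the subautomaton $\A_{\hat{s}}$ keeping only $\hat{s}$'s transitions is then deterministic, and present-focus forces its unique run on every $w$ to attain $\A(w)$, giving $\A_{\hat{s}}\equiv\A$. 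The reason \cref{cl:PresentOrientedAndSuffixMonotonic} does not apply is that $\Sup$ is not suffix monotonic, so an arbitrary positional selection among $s$'s transitions need not remain cautious; the specific structure of $\Sup$ must be exploited instead.

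The key ingredient I would isolate is a \emph{monotonicity of cautiousness} for $\Sup$: whether a transition $t=\trans{q}{\sigma:\weight}{q'}$ is cautious after a run prefix $\rho$ ending in $q$ depends on $\rho$ only through its banked value $\Sup(\rho)$, and it is upward monotone, i.e.\ cautious at banked value $b$ implies cautious at every $b'\geq b$. This is a one-line consequence of $\Sup(\rho\pi)=\max(\Sup(\rho),\Sup(\pi))$: cautiousness at $b$ says every competing run $\pi'$ from $q$ can be matched by a run $\pi$ starting with $t$ so that $\max(b,\Sup(\pi))\geq\max(b,\Sup(\pi'))$, and raising the common value $b$ only makes the inequality easier. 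Morally, a low banked value is the situation of maximal pressure to still see a high weight, so keeping the optimum reachable there keeps it reachable once more value has been banked. Guided by this, I define $\hat{s}$ at each pair $(q,\sigma)$ by an \emph{extremal witness}: among the finitely many banked values occurring on $s$-plays reaching $q$, let $m_q$ be the minimum, fix a word $u$ whose $s$-play reaches $q$ with $\Sup$ equal to $m_q$, and let $\hat{s}(q,\sigma)$ be the (cautious) transition $s$ plays after $u$ on $\sigma$. By the lemma this transition is cautious after \emph{every} prefix ending in $q$ of banked value at least $m_q$.

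Correctness then reduces to the invariant that the pruned run $\hat\rho_w$, whenever it reaches a state $q$, has banked value at least $m_q$; present-focus finishes once this is known, since the invariant plus the lemma make every move of $\hat{s}$ cautious. I would prove the invariant by induction along $\hat\rho_w$: if $\hat{s}$ takes $t=\trans{q}{\sigma:\weight}{q'}$ from banked value $b\geq m_q$, then since $t$ is $s$'s response after the witness $u$, the $s$-play on $u\sigma$ reaches $q'$ with banked value $\max(m_q,\weight)\leq\max(b,\weight)$, the latter being exactly the new banked value of $\hat\rho_w$; hence $m_{q'}\leq\max(b,\weight)$ and the invariant is preserved. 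The $\Inf$ case is the order-dual: the banked value is the running minimum, cautiousness is \emph{downward} monotone, and one selects the witness of \emph{maximal} banked value. The main obstacle is really conceptual rather than computational: finding the replacement for suffix monotonicity, namely the monotonicity of cautiousness together with the extremal-witness choice that makes it self-sustaining along runs of the pruned automaton; once this is in place the verification is the short induction above.
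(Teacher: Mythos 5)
Your proposal is correct and follows essentially the same route as the paper: both prune by selecting, for each state $q$, the transition that the history-determinism strategy plays after a $\Sup$-minimal (resp.\ $\Inf$-maximal) prefix reaching $q$, and both then verify that the resulting positional strategy is cautious and invoke present-focus. Your explicit monotonicity-of-cautiousness lemma together with the banked-value invariant is just a cleaner packaging of the paper's proof by contradiction (and makes precise the step where the paper compares $\Sup(\rho)$ with $\Sup(\tau)$ for the $s'$-generated run $\tau$), so nothing further is needed.
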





\section{Applications to Quantitative Synthesis}\label{sec:synthesis}
\newcommand{\Input}{\item[Input:]}
\newcommand{\System}{\item[System:]}
\newcommand{\Realisability}{\item[Realisability:]}
\newcommand{\Synthesis}{\item[Synthesis:]}

\newcommand{\Part}[1]{\paragraph*{#1}}

Establishing the non-equivalence of history determinism, good for gameness and their threshold versions leaves us with the question of which definitions, if any, are the most useful or interesting ones. 
We explore this question from the perspective of quantitative synthesis. 

In the Boolean setting, Church's classical synthesis problem asks for a transducer $\T$ that produces, letter by letter, for every input sequence $I\in\Sigma_I^\omega$
an output sequence $\T(I)\in\Sigma_O^\omega$ such that $I\otimes \T(I)\in L$ for some specification language $L\in(\Sigma_I\otimes\Sigma_O)^\omega$. 
This synthesis requirement is \emph{global}, in the sense that the output of all input sequences should satisfy the same constraint. 
A \emph{local} variant of the problem, termed ``good enough synthesis'' in \cite{AK20}, considers each input sequence $I$ separately, requiring that the output $\T(I)$ of the transducer on the input $I$ satisfies $I\otimes \T(I)\in L$ only if $I\otimes O\in L$ for some sequence $O\in\Sigma_O^\omega$. 

In quantitative synthesis, the specification is a function $f: (\Sigma_I \times \Sigma_O)^\omega \rightarrow \Reals$ (generalizing languages $L: (\Sigma_I \times \Sigma_O)^\omega \rightarrow \{\True,\False\}$ ), and the two synthesis problems above naturally generalize into two quantitative variants each -- requiring either the best possible value or a value matching a given threshold.
We thus have four variants of quantitative synthesis: Global/Local Threshold/Best-value synthesis.
It turns out that {\bf good for gameness is closely related to global synthesis, while history determinism is closely related to local synthesis}, both for the threshold and best-value settings. 

\Subject{Global Threshold and Best-value Synthesis}
The global threshold variant is the closest to Church synthesis: given a function $f$ and a threshold $t\in\Reals$, it requires that $f(I\otimes \T(I)) \geq t$ for all input sequences $I$.
In the best-value version, $t$ is not given and we are interested in what is the highest threshold that the system can guarantee. 

Analogously to the Boolean setting, a $t$-threshold good for games $\Val$ automaton $\A$ realizing $f$ can be used instead of a deterministic automaton to solve the global threshold synthesis problem: $\A$ is turned into a $t$-threshold $\Val$ game $G_\A$, in which Adam controls the input letters and Eve controls the output letters. Then, the synthesis problem is realizable if and only if Eve has a winning strategy in $G_\A$. If $\A$ is nondeterministic, Eve's winning strategy in $G_\A$  induces a transducer for the synthesis problem. In the best-value case, the same is true, but $\A$ must be good for games, rather than just for $t$-threshold games, and it is Eve's optimal strategy, if it exists, that induces the solution transducer.


\Subject{Local Best-value and Threshold Synthesis}
 We define $\best{f}{I} = \sup_{O\in\Sigma_O^\omega} f(I\otimes O)$ for $I\in\Sigma_I^\omega$, i.e., the best value that the input $I$ can get, or converge to, according to $f$.
The local best value synthesis problem requires that for every $I\in\Sigma_I^\omega$, we have $f(I\otimes \T(I)) = \best{f}{I}$.
Since $\best{f}{I}$ is a supremum, it need not be attained by any word; then the synthesis problem is unrealisable, even if the system could force a value arbitrarily close to $\best{f}{I}$. 
The threshold variant requires that for every $I\in\Sigma_I^\omega$, such that $\best{f}{I} \geq t$, we have $f(I\otimes \T(I)) \geq t$, for a given threshold $t\in\Reals$.


The local best value (or $t$-threshold) synthesis problem of a function given by deterministic (or even history-deterministic nondeterministic) automata and the problem of whether a nondeterministic automaton is ($t$-threshold) history deterministic reduce to each other. 
 The relationship between good-enough synthesis~\cite{AK20} and history determinism was noted for visibly pushdown automata in~\cite{GJLZ21}; a similar reduction in~\cite{FLW20} reduces the approximative local best-value synthesis of deterministic quantitative automata over finite words by finite transducers to the notion of $r$-regret determinisability, that is,  whether a nondeterministic automaton is close enough to a deterministic automaton obtained by pruning its product with a finite memory. Our reductions are in the same spirit, but relate the synthesis problem to history determinism rather than determinisability, and obtain a two-way correspondence for all history-deterministic nondeterministic quantitative automata. In the alternating case, only one direction is preserved, and only for realisability, rather than synthesis.

\begin{proposition}\label{cl:BestValueSynthesisAndHistoryDeterminism}
	Deciding the local best value (resp. $t$-threshold) synthesis problem with respect to a function $f$ given by a ($t$-threshold) history deterministic nondeterministic $\Val$-automaton $\A$ and deciding whether a nondeterministic $\Val$-automaton $\A'$ is ($t$-threshold) history deterministic are linearly inter-reducible. 
	Furthermore, the witness of ($t$-threshold) history determinism of $\A'$ is implementable by the same computational models as a solution to the best-value ($t$-threshold) synthesis of $\A$.
\end{proposition}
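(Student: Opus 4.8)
The plan is to establish the linear inter-reducibility by exhibiting both directions via a single, shared construction: I will view the local synthesis problem as a game played over the automaton $\A$, and observe that this game is essentially Eve's letter game on $\A$. The key insight is that in both problems, a strategy for Eve builds a run letter-by-letter while the input (or the opponent) is revealed incrementally, and the winning condition in each case is that the run achieves the best (or threshold) value attainable for the word read so far.

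\begin{proof}
	Consider the local best-value (resp.\ $t$-threshold) synthesis problem for a function $f$ given by a history deterministic (resp.\ $t$-threshold history deterministic) nondeterministic $\Val$-automaton $\A$ over the input-output alphabet $\Sigma_I\times\Sigma_O$.

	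\textbf{From synthesis to history determinism.}
	First I would reduce the synthesis problem to deciding history determinism. Given $\A$, construct an automaton $\A'$ over the alphabet $\Sigma_I$ by projecting away the output component: $\A'$ has the same states as $\A$, and for each input letter $i\in\Sigma_I$, $\A'$ nondeterministically chooses both an output letter $o\in\Sigma_O$ and a transition of $\A$ over $(i,o)$, carrying the same weight. By construction, for every input sequence $I$, we have $\A'(I)=\sup_{O} f(I\otimes O)=\best{f}{I}$.
	Now, deciding whether $\A'$ is ($t$-threshold) history deterministic amounts to asking whether Eve wins her letter game on $\A'$: she must, reading $I$ letter-by-letter, produce a run (hence, implicitly, output letters $O$) whose value matches $\best{f}{I}$ (resp.\ reaches $t$ whenever $\best{f}{I}\geq t$). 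A winning strategy for Eve in this letter game is exactly a letter-by-letter strategy producing $O$ from $I$ with $f(I\otimes O)=\best{f}{I}$, which is a solution to the local synthesis problem; conversely a synthesis transducer yields such a strategy. Thus the synthesis problem is realisable iff $\A'$ is ($t$-threshold) history deterministic, and a witness for one is effectively a witness for the other, implementable by the same computational model.

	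\textbf{From history determinism to synthesis.}
	For the converse, given a nondeterministic $\Val$-automaton $\A'$ over an alphabet $\Sigma$, I would build a synthesis instance whose solution decides history determinism of $\A'$. Take $\Sigma_I=\Sigma$ and let $\Sigma_O$ encode the resolution of nondeterminism, i.e.\ a choice of transition of $\A'$ at each step. Define $f:(\Sigma_I\times\Sigma_O)^\omega\to\Reals$ so that $f(I\otimes O)=\Val(\pi)$ when $O$ describes a valid run $\pi$ of $\A'$ on $I$ (and $f=-\infty$, or a sufficiently small sentinel, when $O$ is not a valid run). Then $\best{f}{I}=\A'(I)$, and a solution to the local best-value (resp.\ threshold) synthesis problem is precisely a letter-by-letter strategy resolving the nondeterminism of $\A'$ so as to match $\A'(I)$ (resp.\ reach $t$ whenever $\A'(I)\geq t$), which is a witness that Eve wins her letter game on $\A'$. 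Since $\A'$ is given as nondeterministic, Adam's letter game is trivially won (nondeterministic automata have no universal choices), so Eve winning her letter game is exactly ($t$-threshold) history determinism. The reduction is linear, and again the synthesis transducer and the history-determinism witness are interchangeable as computational objects.
\end{proof}

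The main obstacle I anticipate is the faithful handling of the \emph{best-value} (as opposed to threshold) case, where $\best{f}{I}$ is a supremum that may not be attained: one must check that Eve's letter-game winning condition $\Val(\pi)\geq\A'(I)$ correctly captures attainment of the supremum, so that the non-realisable cases on the synthesis side line up exactly with failures of history determinism. A secondary point requiring care is ensuring the two reductions are genuinely \emph{linear} in size and preserve the class of computational model (finite-state transducer, etc.) implementing the witness, which follows because both constructions are simple syntactic relabellings of $\A$ (projecting or expanding the alphabet) that leave the state space essentially unchanged.
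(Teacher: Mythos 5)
Your proposal is correct and follows essentially the same route as the paper: both directions use the same two constructions (folding the output-letter choice into nondeterminism to get $\A'$ over $\Sigma_I$, and exposing the nondeterminism of $\A'$ as output letters over $\Sigma\times\Delta$), and your worry about the unattained supremum is resolved exactly as in the paper, since Eve's letter-game condition $\Val(\pi)\geq\A'(I)$ and unrealisability of the synthesis instance fail together in that case. The only detail you elide is that in the synthesis-to-history-determinism direction, when $\A$ itself is nondeterministic, a synthesis transducer alone does not yield a letter-game strategy on $\A'$ --- Eve must also resolve $\A$'s original nondeterminism, so the transducer has to be combined with the assumed history-determinism witness of $\A$, as the paper's proof makes explicit.
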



\NotNeeded{
\Subject{Local Threshold Synthesis}
This synthesis problem is similar to local best value synthesis, just that it also gets a threshold $t\in\Reals$ and requires that for every $I\in\Sigma_I^\omega$, such that $\best{f}{I} \geq t$, we have $f(I\otimes \T(I)) \geq t$.

\begin{proposition}\label{cl:ThresholsSynthesisAndThresholdHistoryDeterminism}
	Deciding the local $t$-threshold  synthesis problem with respect to a function $f$ that is realized by a history deterministic nondeterministic $\Val$-automaton $\A$ 
	and deciding whether a nondeterministic $\Val$-automaton $\A'$ is $t$-threshold history-deterministic are linearly inter-reducible. 
	Furthermore, the witness of history determinism of $\A'$ is implementable by the same computational models as a solution to the best-value synthesis of $\A$.
\end{proposition}
}

\section{Conclusions}
We have painted a picture of how definitions of good for gameness and history determinism  behave in the quantitative setting, and how they relate to quantitative synthesis. Our work opens up many directions for further work, of which we name a few.
\begin{itemize}
\item The reductions between local synthesis and history determinism motivate expanding methods used to decide history determinism of $\omega$-regular automata to quantitative ones.
\item So far, we have restricted our attention to determined games, but one could also consider more general classes of games and study the effect of composition in that setting.
\item One appeal of good for games and history deterministic automata is that they can be more expressive and more succinct than deterministic ones, while their synthesis problems retain the same complexity. The expressivity and succinctness of quantitative good for games and history deterministic automata is open for most value functions.
\item It is natural to look at approximative versions of the discussed notions (as has been done, see the related work section); we expect our results to also generalise in that direction.
\end{itemize}

\newpage
\bibliography{gfg}

\noindent{\huge{\textbf{Appendix}}}
\appendix

\section{Proofs of Section~\ref{sec:comparisons}}\label{ap:comparisons}

\begin{proof}[Proof of \cref{cl:GFGiffTGFG}]
	One direction is immediate: if an automaton $\A$ is good for all games then it is also good for all threshold games. Indeed, assuming that $\A$ is good for games, if the value of a game $G$ is $v$, then the value of the product game $G\times \A$ is also $v$. Then, for all thresholds $t$, Eve wins the $t$-threshold game on both $G$ and $G\times \A$ if and only if $v\geq t$.
	
	As for the other direction, assume $\A$ is good for threshold games. Let $G$ be a game with value $v$. Since $\A$ composes with threshold games, considering the $v$-threshold game on $G$, we know that Eve can achieve at least $v$ in the product $v$-threshold game $G\times A$. Conversely, let $v'\geq v$ be the value of $G\times A$. Since Eve wins the $v'$-threshold game on $G\times A$, and $\A$ is good for threshold games, Eve can also achieve at least $v'$ in $G$, i.e., $v'=v$, the value of $G$.
\end{proof}

\begin{proof}[Proof of \cref{cl:THDimpliesTGFG}]
	Consider a threshold history deterministic automaton $\A$ over an alphabet $\Sigma$, realizing a function $f$. 
	Then for every threshold $t\in\Reals$, Eve has a winning strategy $s'$ in the $t$-threshold letter game on $\A$.
	
	Now, consider a $\Sigma$-labelled $t$-threshold game $G$ with payoff function $f$, in which Eve has a winning strategy $s$.
	Then in the product game $G\times \A$, Eve can combine $s$ and $s'$  into a strategy $\hat s$, so that $s$ guarantees that any play $\pi=(w,\rho)$ that agrees with $\hat s$ reads a word $w$ such that $\A(w)\geq t$, and $s'$ guarantees that $\Val(\rho)\geq t$ (since $\A(w)\geq t$). 
	
	By a similar argument, if Adam has a winning strategy in his threshold letter game, he can combine it with his winning strategy in a threshold game for getting a winning strategy in the product threshold game.
\end{proof}

\subsection{Proofs of Section~\ref{sec:SometimesHdEqDbp}}\label{ap:SometimesHdEqDbp}

\begin{proof}[Proof of \cref{cl:DSum-present-oriented}]
	Consider a $\lambda$-DSum $\Val$ automaton $\A$ and let $m$ be the maximal absolute transition weight in $\A$. Observe that for every word $w$ and state $q$ of $\A$, we have $|\A^q(w)| \leq \frac{m}{1-\lambda}$. 
	
	Let $s$ be a cautious strategy of Eve in the letter game on $\A$.
	By the definition of a cautious strategy, for every finite word $u$, playing according to $s$ on $u$ generates a finite run $\rho$ that ends in some state $q$, such that for every infinite word $v$, there is an infinite run $\pi$ on $v$ from $q$, such that $\Val(\rho \pi) = \A(uv)$.
	
	Now, consider a word $w$, let $r$ be the run of $\A$ on $w$ that is generated by following $s$, and let $r'$ be an optimal run of $\A$ on $w$. 
	For every position $i$, let $q_i$ be the state that $r[0..i]$ ends in and $q'_i$ be the state that $r'[0..i]$ ends in.
	By the cautiousness of $s$, for every position $i$, there is a run $\pi$ from  $q_i$ on $w[i+1..] $, such that for every run $\pi'$ from  $q'_i$ on $w[i+1..] $, we have $\Val(r[0..i]) + \lambda^i\Val(\pi) \geq \Val(r'[0..i]) + \lambda^i\Val(\pi')$.
	
	Since $\Val(\pi)\leq \frac{m}{1-\lambda}$ and $\Val(\pi')\geq -\frac{m}{1-\lambda}$, we get that  $\Val(r'[0..i]) - \Val(r[0..i]) \leq  \frac{2m\cdot \lambda^i}{1-\lambda}$. Since $\lim_{i\to\infty}\frac{2m\cdot \lambda^i}{1-\lambda} = 0$, we get that $\Val(r)=\Val(r')$, implying that Eve wins the letter game.
\end{proof}

\begin{proof}[Proof of \cref{cl:InfSupHdImpDbp}]
	We provide the proof for $\Sup$ automata and then describe the required changes for adapting it to $\Inf$ automata.
	
	Consider a history deterministic $\Sup$ automaton $\A$ on finite words in $\Sigma^*$, whose history determinism is witnessed by a strategy $s$. 
	We derive from $s$ a positional strategy $s'$, by taking for every state $q$ of $\A$ and letter $\sigma\in\Sigma$, the transition that $s$ chooses over a minimal prefix, where minimality is with respect to the $\Sup$ function. 
	
	Formally, for every state $q$, let $m(q)$ be a $\Sup$-minimal run that reaches $q$ along $s$; namely $m(q) = \rho$, such that $\rho$ is a run of $\A$ that agrees with $s$ and ends in $q$, and such that for every run $\rho'$ of $\A$ that agrees with $s$ and ends in $q$, we have $\Sup(\rho) \leq \Sup(\rho')$. (Notice that since there are finitely many weights in $\A$, such a minimal run, which need not be unique, always exists.)  For every state $q$ of $\A$ and letter $\sigma\in\Sigma$, we define $s'(q,\sigma) = t$, such that $s$ chooses $t$ over the prefix run $m(q)$ and current letter $\sigma$.
	
	We claim that $s'$ is cautious. Indeed, for the correctness proof, we shall change $s$ into $s'$ iteratively, considering in each iteration a single state $q$ and letter $\sigma$. Assume by way of contradiction that exists a word $u\in\Sigma^*$ on which $s'$ generates a path $\tau$ that ends in a state $q$, such that $s'(u\sigma) = t$ for a non-cautious transition $t$. Without loss of generality, we may assume that this is not the case for any strict prefix of $u$, as otherwise we can consider that prefix instead of $u$.
	
	By the definition of non-cautiousness, there exists a word $w$, such that the maximal value of $\Sup(\tau\pi)$ for a run $\pi$ from $q$ over $\sigma w$ starting with $t$ is strictly smaller than the maximal value of $\Sup(\tau\pi')$ where $\pi'$ is a run from $q$ over $\sigma w$  that does not start with $t$.
	
	It thus follows that $\Sup(\pi') > \Sup(\tau)$ and that for every run $\pi$ from $q$ over $\sigma w$ starting with $t$, we have $\Sup(\pi') > \Sup(\pi)$. Now, let $\rho$ be a run that witnesses $t$'s minimality in the definition of $s'$, namely $s$ chooses $t$ when reading $\sigma$ after reaching $q$ over $\rho$, and for every run $\rho'$ that ends in $q$, we have $\Sup(\rho) \leq \Sup(\rho')$.
	
	Then, in  particular, $\Sup(\rho) \leq \Sup(\tau)$. Hence, $\Sup(\pi') > \Sup(\rho)$. Therefore, for every run $\pi$ from $q$ over $\sigma w$ starting with $t$, we have $\Sup(\rho\pi') > \Sup(\rho\pi)$, contradicting the cautiousness of $s$.
	
	Having that $s'$ is cautious, we get from \cref{cl:finite-is-present-oriented} that it is also winning in the letter game, implying that the deterministic automaton that results from pruning $\A$ along $s'$ is indeed equivalent to $\A$.
	
	Now, for $\Inf$ automata, the proof is analogous, choosing the $\Inf$-maximal run rather than the $\Sup$-minimal run, switching between some $\geq$ and $\leq$ and between some $<$ and $>$, and providing the following final argument:
	For every run $\pi$ from $q$ over $\sigma w$ starting with $t$, we have $\Inf(\pi) < \Inf(\tau)$ and $\Inf(\pi) < \Inf(\pi')$. Now, let $\rho$ be a run that witnesses $t$'s maximality in the definition of $s'$, namely $s$ chooses $t$ when reading $\sigma$ after reaching $q$ over $\rho$, and for every run $\rho'$ that ends in $q$, we have $\Inf(\rho) \geq \Inf(\rho')$.
	
	Then, in  particular, $\Inf(\rho) \geq \Inf(\tau)$. Hence, for every run $\pi$ from $q$ over $\sigma w$ starting with $t$, we have $\Inf(\pi) < \Inf(\rho)$ and $\Inf(\pi) < \Inf(\pi')$. Hence,  for every run $\pi$ from $q$ over $\sigma w$ starting with $t$, we have $\Inf(\rho\pi) < \Inf(\rho\pi')$, contradicting the cautiousness of $s$.
	
\end{proof}

\section{Proofs of Section~\ref{sec:synthesis}}\label{ap:synthesis}

\begin{proof}[Proof of \cref{cl:BestValueSynthesisAndHistoryDeterminism}]\

	$\Longrightarrow:$ 	Reducing the synthesis problem to the history-determinism problem.
	
		The idea of the reduction (both in the best-value and $t$-threshold case) is to turn output letter choices 
	in $\A$ into nondeterministic choices in $\A'$. Then $\A'$ 
	maps $I\in \Sigma_I^\omega$ onto $\best{\A}{I}$. A solution to the synthesis problem for $\A$ corresponds exactly to a function that resolves the nondeterminism of $\A'$ on the fly to build a run with value $\best{\A}{I}$, that is, a witnesses of the history determinism of $\A'$. If $\A$ is itself nondeterministic, then $\A'$ will have both the nondeterminism of $\A$ and the nondeterminism that stems from the choice of output letters. As long as the nondeterminism of $\A$ is history deterministic, the nondeterminism of $\A'$ is history deterministic if and only if $\A$ is local best value realisable.
	
	More formally, first let us define formally the projection of $\A$ onto its first component: $\A'=(\Sigma_I,Q,\iota,\delta')$, where $\delta'(q,a)=\bigvee_{b\in \Sigma_O} \delta(q,(a,b))$.
	In other words, the automaton $\A'$ moves the $\Sigma_O$ letters from the input word into a nondeterministic choice. It implements a mapping of inputs $I\in \Sigma_I^\omega$ onto $\best{\A}{I}$. We now argue that witnesses of history determinism for $\A'$ coincide with solutions to the best-value synthesis problem for $\A$. Let $s$ be the witness of the history determinism of $\A$.

	We first argue that a solution $s'$ to the best-value synthesis problem for $\A$, combined with $s$ is a witness that $\A'$ is history-deterministic. Indeed, in Eve's letter game on $\A'$, Eve has two types of choices: a choice $\bigvee_{b\in \Sigma_O} \delta(q,(a,b))$ of an $\Sigma_O$-letter, and the choice in $\delta(q,(a,b))$ that stems from $\A$. Let $\hat{s}$ be the strategy that after a run prefix $\rho$ ending in a state $q$ over a word $w\in \Sigma_I$ chooses the letter $s'(w)$, that is, the disjunct $\delta(q,(a,s'(w)))$ in the disjunction $\bigvee_{b\in \Sigma_O} \delta(q,(a,b))$. Then, from $\delta(q,(a,s'(w)))$, $\hat{s}$ behaves as $s$ would after a run of $\A$ over $w\otimes\out{s}(w)$.
	
	First, observe that a run $\rho$ of $\A'$ over $I\in \Sigma_I^\omega$, labelled with the choices of $\Sigma_O$-letters forming some $O\in \Sigma_O^\omega$, corresponds to a run of $\A$ over $I\otimes O$ with the same value.
	
	Then, since $s'$ is a solution to the best value synthesis problem, it guarantees that given an input word $\Sigma_I$, the sequence of $\Sigma_O$ letters chosen by $\hat{s}$ is $\out{s}(I)$, and $\A(I\otimes \out{s}(I))=\best{\A}{I}$. Then, as $s$ witnesses the history determinism of $\A$, $\hat{s}$ guarantees that $\rho$ has value $\A(I\otimes \out{s}(I)$, that is, $\hat{s}$ witnesses the history determinism of $\A'$.

	For the converse direction, assume $\A'$ is history deterministic, as witnessed by some strategy $s$. We claim that $s$ induces a solution $s'$ to the synthesis problem for $\A$ as follows: after reading an finite sequence of inputs $Ia\in \Sigma_I^*$, $s$ has built some run $\rho$ over $I$ that ends in a state $q$, after which $s$ resolves a disjunction $\bigvee_{b\in \Sigma_O} \delta(q,\binom{a}{b})$ by choosing some $b\in \Sigma_O$. We then set $s'(Ia)=b$. Then, as $s$ witnesses that $\A'$ is history-deterministic, the run chosen by $s$ over an input $I\in \Sigma_I^\omega$ has the value $\best{\A}{I}$. By construction of $\A'$ and $s'$, this is the value  $\A(I\otimes \out{s'}(I)$, that is, $s'$ is indeed a solution to the synthesis problem on $\A$. Furthermore, observe that an implementation of $s$ also implements $s'$ by ignoring the outputs of $s$ that do not choose $\Sigma_O$ letters, so the memory of the solution to the synthesis problem is bounded by the memory required by a witness of history determinism.
	
	$\Longleftarrow$: Reducing the history-determinism problem to the synthesis problem.

	Dually to the previous translation, we turn the nondeterminism in an automaton $\A$ into choices of output letters in the best-value synthesis problem. We build a deterministic automaton $\A'$ that is similar to $\A$ except that it reads both an input letter and a transition; then a transition can only be chosen if it is the second element of the input (that is, the output letter). Then $\A'$ maps valid runs of $\A$ to their value and a solution to the local best value synthesis problem of $\A'$ corresponds exactly to a witness of history-determinism for $\A$.
	
	Formally, let $\A'$ be the $\Val$ automaton $(\Sigma \times \Delta, Q,\iota,\delta')$
	where $\delta'(q,(a,q\xrightarrow{a:x}q'))= (x,q')$ if $(x,q')\in \delta(q,a)$. $\A'$ maps valid runs of $\A$ written as pairs $(w,r)$ where $r$ is a run of $\A$ over $w$, onto $\Val(r)$ and in particular $\best{\A'}{I}=\A(I)$.
	
	We claim that $\A'$ is best-value realisable if and only $\A$ is history-deterministic. Indeed,  a solution $s$ to the best value synthesis problem of $\A'$ corresponds to a function building a run of $\A$ over the input $I$ transition by transition such that the value of the run is $\best{\A'}{I}$. Since $\best{\A'}{I}$ is $\A(I)$, $s$ is precisely a witness of history-determinism in $\A$. Similarly, a witness of history-determinism in $\A$ induces a solution to the best value synthesis problem for $\A'$ since it builds a run of $\A$ over $I$ with value at least $\A(I)$, exactly what is required from a solution to the best value synthesis.

\end{proof}

\end{document}